\let\hat\widehat
\let\tilde\widetilde
\crefname{equation}{}{}
\Crefname{equation}{}{}
\newcommand{\single}{\renewcommand{\baselinestretch}{1.2}\normalsize}
\newcommand{\double}{\renewcommand{\baselinestretch}{1.63}\normalsize}
\global\long\def\data{\mathcal{D}}
\global\long\def\sdata{\mathcal{I}}
\global\long\def\alg{\mathcal{A}}
\global\long\def\t{\mathrm{T}}
\global\long\def\mean{\mathbb{E}}
\global\long\def\indc{\mathds{1}}
\newtheorem{theorem}{Theorem}
\newtheorem{lemma}{Lemma}
\newtheorem{proposition}{Proposition}
\newtheorem{remark}{Remark}
\newtheorem{assumption}{Assumption}
\crefname{definition}{\textbf{definition}}{definitions}
\Crefname{definition}{Definition}{Definitions}
\crefname{assumption}{\textbf{assumption}}{assumptions}
\Crefname{assumption}{Assumption}{Assumptions}
\title{A Two-Sample Conditional Distribution Test Using Conformal Prediction and Weighted Rank Sum}
\author{Xiaoyu Hu \\
	School of Mathematical Sciences, Center for Statistical Science, Peking University, China\\
	and \\
	Jing Lei \\
	Department of Statistics and Data Science, Carnegie Mellon University, USA}
\begin{document}
	
	\begin{titlepage}
		\thispagestyle{empty}
		\single
		\maketitle
		
		\begin{center}
			\textbf{Abstract}
		\end{center}
		We consider the problem of testing the equality of conditional distributions of a response variable given a vector of covariates between two populations.  Such a hypothesis testing problem can be motivated from various machine learning and statistical inference scenarios, including transfer learning and causal predictive inference.
		We develop a nonparametric test procedure inspired from the conformal prediction framework.  The construction of our test statistic combines recent developments in conformal prediction with a novel choice of conformity score, resulting in a weighted rank-sum test statistic that is valid and powerful under general settings. To our knowledge, this is the first successful attempt of using conformal prediction for testing statistical hypotheses beyond exchangeability. Our method is suitable for modern machine learning scenarios where the data has high dimensionality and  large sample sizes, and can be effectively combined with existing classification algorithms to find good conformity score functions.  The performance of the proposed method is demonstrated in various numerical examples.
		
		\vspace*{.1in}
		
		\noindent\textsc{Keywords}: {conformal prediction; covariate shift; distribution-free; model misspecification; rank-sum test.}
		
	\end{titlepage}
	
	\setcounter{page}{2}
	\double
	
	\section{Introduction}\label{sec:intro}
	
	Suppose we have two independent random samples
	$$ \{(X_{1i}, Y_{1i})\}_{i=1}^{n_1} \stackrel{i.i.d}{\sim} P_1 \quad \mathrm{and} \quad \{(X_{2j}, Y_{2j})\}_{j=1}^{n_2} \stackrel{i.i.d}{\sim} P_2 \,, $$
	where $P_1$, $P_2$ are distributions on a product space $\mathcal X\times\mathcal Y$.
	Here we consider a regression or classification setting where $X$ is the free variable (covariate), and $Y$ is the response variable.  We allow the spaces $\mathcal X$, $\mathcal Y$ to be general, and do not assume specific forms such as smoothness or parametric forms of $P_1$, $P_2$.  For example, $\mathcal X$ and $\mathcal Y$ can be multi-dimensional Euclidean spaces, manifolds, or discrete sets.   
	
	For $j=1,2$, let $P_j(\cdot|x)$ be the conditional distribution of $Y$ given $X=x$ under $P_j$, and $P_{j,X}(\cdot)$ be the corresponding marginal distribution of $X$.  We are interested in testing whether these two conditional distributions are the same.
	\begin{align}\label{eq:problem_i}
	H_0:~P_1( \cdot |x) = P_2( \cdot |x)~~\text{for all}~~x\in\mathcal X, \quad \mathrm{versus} \quad H_1:~~\text{otherwise}\,.
	\end{align}
	We illustrate $P_1$ and $P_2$ in the two motivating examples below.
	
	
	\paragraph{Example 1: Covariate shift in transfer learning.}
	In traditional machine learning, a central task is to build a predictor $\hat f:\mathcal X \mapsto \mathcal Y$ from a training sample $\{(X_{1i}, Y_{1i})\}_{i=1}^{n_1} \stackrel{i.i.d}{\sim} P_1$, and use it to predict the unseen response $Y_{2}$ given a future observation of the covariate $X_{2}$, where $(X_2,Y_2)\sim P_2$. A common assumption is that the training data distribution $P_1$ and testing data distribution $P_2$ are the same.  In many modern applications, it is often the case that the testing data may come from a different distribution, and classical methods developed for iid data need to be modified to account for such distribution difference.  Sub-fields known as domain adaptation and transfer learning have emerged to deal with scenarios in which the training data and testing data may come from different but related distributions \citep{pan2009survey,csurka2017domain,kouw2018introduction}. 
	To avoid arbitrary changes in the distribution, one situation of particular practical and theoretical interest assumes that the conditional distribution of the response given the covariate remains the same between the training and testing data while the covariates may follow different marginal distributions. This is the \emph{covariate shift} assumption. It is mathematically equivalent to the null hypothesis in \eqref{eq:problem_i}, and enables us to obtain improved predictive performance by weighting the training data according to the marginal density ratio.  This approach has been widely studied in the machine learning literature.  For examples, see \citet{shimodaira2000improving}, \citet{sugiyama2007covariate}, \citet{sugiyama2008direct}, \citet{bickel2009discriminative}, \citet{gretton2009covariate}, and \citet{sugiyama2012machine}.
	The two-sample conditional distribution testing problem is a formal way to verify the covariate shift assumption, assuming an independent sample $\{(X_{2j},Y_{2j}):1\le j\le n_2\}$ from $P_2$ is also available.
	If we do not reject the null hypothesis, the methods and theory based on the covariate shift assumption may be plausible.  Otherwise, those methods should be used with caution.
	
	\paragraph{Example 2: Causal predictive inference.}
	In the literature of causal inference, a causal predictive model is one that will work equally well under different experimental or observational environments \citep{PetersBM16,Buhlmann20,li2021searching}.  Formally, a covariate vector $X$ is called  causal for the prediction of $Y$ if $Y=h(X,\epsilon)$, for a fixed function $h$ and an independent random variable $\epsilon$ with a fixed distribution.  This definition highlights the idea that if $X$ contains all the causal factors of $Y$, then the way $X$ affects $Y$ does not depend on any other variables (potential confounders). Therefore, the conditional distribution of $Y$ given $X$ will remain the same under different experimental conditions, regardless whether these conditions are fully controlled or simply observed.  In this context, the two distributions $P_1$ and $P_2$ represent the joint distribution of $(X,Y)$ under two different experimental conditions. If we reject $H_0$ in \eqref{eq:problem_i}, then $X$ is unlikely to be causal for the prediction of $Y$.

	Most existing methods for testing conditional distributions follow one of two directions.
	The first is to test the equality of conditional moments, including semiparametric methods \citep{hardle1990semiparametric}, nonparametric methods \citep{hall1990bootstrap,kulasekera1995comparison,kulasekera1997smoothing,neumeyer2003nonparametric}, and second order moments \citep{pardo2015tests}. However, in many applications such as risk management and insurance, it is not enough to just consider mean and variance terms, and it is necessary to consider the whole conditional distribution of the response given the covariates. 
	%
	Another direction is to extend methods for unconditional distribution tests to conditional distribution tests. \citet{andrews1997conditional} extended the Kolmogorov-Smirnov test to the conditional distribution case. \citet{zheng2000consistent} proposed a test statistic based on the first-order linear expansion of Kullback-Leibler divergence. \citet{fan2006nonparametric} proposed a bootstrap test. \citet{bai2003testing} and \citet{corradi2006bootstrap} studied the problem of testing conditional distributions in a dynamic model.
	%
	%
	Most of the aforementioned methods rely on some assumptions that are hard to verify in a data-driven manner, such as smoothness of density and regression functions, and/or correctly specified parametric models. In addition, existing nonparametric methods usually involve nonparametric density estimation as an intermediate step, making them less feasible when the dimensionality is moderately high.

	Our new method for two-sample conditional distribution test has three remarkable features.  First, our test statistic is inspired from conformal prediction \citep{vovk2005algorithmic,lei2013distribution,lei2018distribution}, a framework of converting point estimators to prediction sets by exploiting the symmetry of data. The type I error control is guaranteed by a weighted exchangeability that is tailored to the null hypothesis, assuming an accurate estimate of the marginal density ratio of the covariates is available. Second, our method does not require estimating the density functions. Instead, it uses a classification algorithm to estimate the density ratio, and can incorporate almost any existing classification algorithms ranging from classical parametric estimators to modern black-box neural nets. In practice, the validity of the data-driven $p$-value depends on the accuracy of the classifier, which can be empirically validated.  This makes our method particularly useful in modern machine learning scenarios with high dimensionality and large sample sizes.  Third, the asymptotic null distribution of our test statistic and its universal power guarantee are rigorously established under certain moment conditions on the density ratios and the accuracy of classification algorithms. To our knowledge, this is the first successful extension of conformal prediction to statistical hypothesis testing beyond exchangeability with provable size and power guarantees for a data-driven procedure. 
	These theoretical results are supported by our simulation and real data examples.
	
	\paragraph{Related work in conformal prediction} We provide a general review of conformal prediction in \Cref{subsec:review}.   Roughly speaking, conformal prediction uses a conformity score to determine a sample point's agreement, or conformity, with the current data set and fitting procedure, and the resulting prediction set is the subset of the sample space with high conformity scores.  Conformal prediction in regression has been studied by \cite{lei2014distribution} in a nonparametric setting and \cite{lei2018distribution} in the high dimensional setting, where the conformity scores are chosen to be either the conditional density of $Y$ given $X$,  or the absolute fitted residual.  However, these conformity scores do not guarantee power against general alternatives.  A main methodological contribution of this paper is to show that using the conditional likelihood ratio as the conformity score can provide universal power guarantee against any alternatives.
	Such a choice of conformity score is partially inspired by a recent work of conformal prediction in classification by \cite{guan2019prediction}.  In the context of transfer learning, under the covariate shift assumption the data points are often exchangeable within the training set or the testing set alone, but not exchangeable when the training and testing datasets are merged together.  This non-exchangeability issue can be treated using the weighted conformal prediction method developed in \cite{tibshirani2019conformal}, who construct a valid $p$-value for a single observation in the test sample assuming the marginal density ratio is known.  Our method combines these ideas, with further theoretical development to allow the marginal density ratios and conformity scores to carry estimation errors.  Moreover, these existing methods only consider prediction, and our method transforms conformal prediction from a prediction tool to a hypothesis testing tool.
	
	There is a line of work on applying conformal inference in other contexts, such as testing the global null for streaming data \citep{vovk2003testing,fedorova2012plug,vovk2019testing,vovk2020testing,vovk2021retrain} and outlier detection \citep{bates2021testing}. 
	Although both the work of \citet{bates2021testing} and ours involve the conformal $p$-values and sample splitting, they are different in various aspects. \citet{bates2021testing} studied the nonparametric outlier detection problem, which is different from our goal of two-sample conditional distribution testing.  The score functions to construct the conformal $p$-values are different. Specifically, they used the one-class classification score, while we used the conditional density ratio.
	Moreover, the ways to exploit the conformal $p$-values are distinct. They constructed the $p$-values for the multiple-testing procedure, while we combined the $p$-values to eventually perform a one sided mean test. 
None of the three features of our method (testing conditional distributions, conformity score function with universal power, and asymptotic error bounds for data-driven procedures) is considered in these papers.

	\section{Background}\label{sec:prelim}
	\subsection{Problem formulation}
	If an $x\in\mathcal X$ is not in the support of $P_{1,X}$ or $P_{2,X}$, then the point $x$ should not matter in testing the conditional distribution, since the conditional distribution given this value of $x$ can be arbitrarily defined.  Therefore, to facilitate discussion we assume that $P_{1,X}$ and $P_{2,X}$ are equivalent to each other in the following sense,
	$$
	P_{1,X}\ll P_{2,X}\,,\quad \text{and}\quad P_{2,X}\ll P_{1,X}\,,
	$$
	where ``$\ll$'' stands for absolute continuity. We further assume, without loss of generality, that $P_{1,X}$, $P_{2,X}$ have density functions $f_{1,X}$, $f_{2,X}$ respectively, under a common base measure. Now we can formally state the null and alternative hypotheses as follows.
	\begin{align}\label{eq:problem}
	H_0: P_{1,X}\left\{ P_1( \cdot |X) = P_2( \cdot |X) \right\}=1, \quad \mathrm{versus} \quad H_1: P_{1,X}\left\{ P_1(\cdot|X) = P_2(\cdot|X) \right\}<1\,.
	\end{align}
	Due to the assumed equivalence between $P_{1,X}$ and $P_{2,X}$, the hypotheses in \eqref{eq:problem} can be equivalently stated by replacing $P_{1,X}$ with $P_{2,X}$.
	For a similar consideration of avoiding triviality and the ease of discussion, we also assume that $P_1(\cdot|x)$ and $P_2(\cdot|x)$ are equivalent, with density functions $f_1(y|x)$ and $f_2(y|x)$ under a common base measure.
	
	\subsection{Conformal prediction}\label{subsec:review}
	Here we briefly introduce conformal prediction in a regression setting.  For conformal prediction in other contexts, such as unsupervised learning, see \citep{vovk2005algorithmic,lei2013distribution,lei2015conformal}.
	
	Given iid data $\{(X_i, Y_i)\}_{i=1}^m$, conformal prediction converts a point estimator of the regression function $\hat \theta:\mathcal X\mapsto \mathcal Y$ to a prediction set $\hat C\in \mathcal X\times \mathcal Y$, with guaranteed finite-sample expected prediction coverage:
	\begin{align}\label{eq:conf_coverage}
	P\left\{Y_{m+1}\in\hat C(X_{m+1})\right\}\ge 1-\alpha\,,
	\end{align}
	where $\hat C(x)=\{y\in\mathcal Y: (x,y)\in\hat C\}$, and the probability is taken over the $(m+1)$-tuple of iid data $\{(X_i,Y_i):1\le i\le m+1\}$.
	
	Let $\mathcal D$ denote the sample $\{(X_i,Y_i):1\le i \le m\}$, and $\mathcal D_{-i}$ the sample obtained by removing $(X_i,Y_i)$ from $\mathcal D$.
	A conformal prediction set $\hat C$ is constructed using a conformity score function $v:(\mathcal X\times\mathcal Y)^{m+1}\mapsto \mathbb R$ that is symmetric in its first $m$ inputs. 
	For a given data set $\mathcal D$, a new $X_{m+1}$ for which a prediction of the corresponding $Y_{m+1}$ is wanted, and a $y\in\mathcal Y$, let $\mathcal D(y)$ be the augmented data set with the $(m+1)$th data point being $(X_{m+1},y)$, and $\mathcal D_{-i}(y)$ the corresponding $m$-tuple dataset with the $i$th sample removed, where the $(m+1)$th sample is $(X_{m+1},y)$. Let
	\begin{equation}\label{eq:def_V} V_i(y) = v(\mathcal D_{-i}(y)\,, ~(X_i, Y_i)), \quad i=1, \dots, m\,,~~
	V_{m+1}(y) = v(\mathcal D\,,~(X_{m+1},y))\end{equation}
	be conformity scores for each sample point in the augmented data $\mathcal D(y)$.  The conformal prediction set using the conformity score function $v(\cdot)$ is
	\begin{equation}\label{eq:conf_set}
	\hat C(X_{m+1}) =\left\{y\in\mathcal Y:\sum_{i=1}^{m+1} \mathbbm{1}\left[V_{i}(y)\le V_{m+1}(y)\right]\ge \lfloor (m+1)\alpha\rfloor \right\}\,.
	\end{equation}
	The finite sample coverage \eqref{eq:conf_coverage} can be easily derived from the iid assumption, the symmetry of $v(\cdot)$, and the construction of $\hat C$ in \eqref{eq:conf_set}.  To see this, if we replace $y$ by $Y_{m+1}$, then the iid assumption and symmetry of $v(\cdot)$ implies exchangeability of $(V_i(Y_{m+1}):1\le i\le m+1)$. Thus the rank of $V_{m+1}$ being lower than $\lfloor (m+1)\alpha \rfloor$ has probability no more than $\alpha$. 
	
	Although the finite sample coverage guarantee only requires $v(\cdot)$ to satisfy a symmetry condition, its choice will have a crucial impact on the quality of the resulting prediction set.  A good choice of $v(\cdot)$ needs to reflect the structure of the underlying distribution of $(X,Y)$ and be able to tell whether a sample point is likely drawn from this distribution.  Such a function $v(\cdot)$ is often constructed from a point estimate $\hat \theta$ of the regression function.  For example, in nonparametric regression, one can choose $v(\mathcal D, (x,y))=\hat f(y|x)$, where $\hat f(\cdot|\cdot)$ is an estimated conditional density function of $y$ given $x$ using the sample $\mathcal D\cup \{(x,y)\}$ \citep{lei2014distribution}.  In high dimensional regression, one can use $v(\mathcal D,(x,y))=-|y-\hat \theta(x)|$ where $\hat \theta$ is an estimated regression function using $\mathcal D\cup \{(x,y)\}$.
	More recently, some conformal prediction methods adaptive to heteroskedasticity based on quantile regression have been proposed \citep{romano2019conformalized,kivaranovic2020adaptive,sesia2021conformal,chernozhukov2021distributional}. In this work, we develop a new conformity score based on conditional density ratios, which is particularly suited for the two-sample conditional testing problem.
	
	
	The definition of $\hat C$ in \eqref{eq:conf_set} is only conceptual and not practical if $\mathcal Y$ is infinite, as it requires to evaluate $V_i(y)$ for all $y$ and all $1\le i\le m+1$.  For practical implementation of conformal prediction, we refer to \cite{lei2018distribution} and \cite{barber2019predictive}.  However, in our hypothesis testing problem, we do not need to actually construct a prediction set. Instead, we only need to compute the corresponding $p$-values for a subset of sample points, and evaluate their deviance from the null distribution.  The details are given in the next section as we develop our testing procedure.

	\section{A conformal test of two-sample conditional distributions}\label{sec:method}
	\subsection{The conformal $p$-value}\label{sec:conf_stat}

	Now we put the conformal prediction method described in \Cref{sec:prelim} under the context of our two-sample testing problem.
	Consider a subset of the data $\mathcal D_{(1)}=\{(X_{1i},Y_{1i}):1\le i\le n_{11}\}$ iid from $P_1$, and just a single pair $(X_{21}, Y_{21})\sim P_2$.  
	Here $n_{11}<n_1$ is a subsample size, whose value will be specified later.
	With the correspondence $m=n_{11}$, $(X_i,Y_i)=(X_{1i}, Y_{1i})$ for $1\le i\le n_{11}$ and $(X_{n_{11}+1},Y_{n_{11}+1})=(X_{21},Y_{21})$, the conformal prediction procedure described in the previous section implies that, under the simplified scenario that $P_{1,X}=P_{2,X}$, the ranking statistic
	\begin{equation}\label{eq:pi}
	\tilde U=\frac{1}{n_{11}+1}\sum_{i=1}^{n_{11}+1} \mathbbm{1}(V_i(Y_{21})\le V_{n_{11}+1}(Y_{21}))
	\end{equation}
	has an approximately $U(0,1)$ distribution under $H_0$.
	
	With a random tie-break we can make $\tilde U$ have the exact $U(0,1)$ distribution. Let $R_-=1+\sum_{i=1}^{n_{11}+1}\mathbbm 1(V_i(Y_{n_{11}+1})<V_{n_{11}+1}(Y_{n_{11}+1}))$
	and $R_+=\sum_{i=1}^{n_{11}+1}\mathbbm 1(V_i(Y_{n_{11}+1})\le V_{n_{11}+1}(Y_{n_{11}+1}))$.  Let $R$ be uniformly and independently sampled from the integers in $[R_-,R_+]$. 
	Now we can construct a uniform random variable
	\begin{equation}\label{eq:U}
	U= \frac{R-1+\zeta}{n_{11}+1}
	\end{equation}
	where $\zeta\sim U(0,1)$ is independent of everything else.  By exchangeability $R$ has a uniform distribution on $\{1,...,n_{11}+1\}$, and hence $U$ has a uniform distribution on $[0,1]$.  This $U$ can be viewed as a continuous version of $\tilde U$ in \eqref{eq:pi}, and can serve as a $p$-value for our testing problem.  Thus we call the statistic $U$ a \emph{conformal $p$-value}.
	
	
	
	In order to develop this idea into a useful test procedure, we need to resolve the following three issues.
	\begin{enumerate}
		\item How to choose a conformity score function $v$?
		\item How to allow for $P_{1,X}\neq P_{2,X}$?
		\item How to make use of multiple data points from $P_2$ to have increased power under $H_1$?
	\end{enumerate}
	
	These three issues are addressed in the next three subsections, respectively.
	
	\subsection{A choice of $v$ that separates $H_0$ and $H_1$}
	For now we still make the assumption $P_{1,X}=P_{2,X}$.
	A good choice of $v(\cdot)$ will be such that the conformal $p$-value $U$ constructed in \eqref{eq:U} has a non-uniform distribution under the alternative hypothesis $H_1$.  Common existing choices such as conditional density and absolute residual do not satisfy this.  Our choice of $v(\cdot)$ is the conditional Radon-Nikodym derivative between $P_1$ and $P_2$:
	$$
	v(x,y) = \frac{dP_{1}(y|x)}{dP_2(y|x)}=\frac{f_1(y|x)}{f_2(y|x)}\,.
	$$
	This $v$ function is different from the conformity score functions introduced in \eqref{eq:def_V}, as it only involves a single data pair $(x,y)$. This is not a real problem as we can let $v$ be independent of the first $m$ arguments and treat the input $(x,y)$ as the last argument. 
	
	\begin{remark}\label{rem:est_V}
		The function $v$ involves the unknown density ratio $f_1(y|x)/f_2(y|x)$.  Our method will need to use an empirical version of $v$:
		$$
		\hat v(x,y)=\widehat{\frac{f_1(\cdot|\cdot)}{f_2(\cdot|\cdot)}}(x,y)\,,
		$$
		where $\widehat{\frac{f_1(\cdot|\cdot)}{f_2(\cdot|\cdot)}}$ is an  estimate of the conditional density ratio, independent of $\{(X_{1i},Y_{1i}):1\le i\le n_{11}\}$ and $(X_{21},Y_{21})$.  A remarkable advantage of our choice of $v$ and $\hat v$ is that the density ratio $f_1(\cdot|\cdot)/f_2(\cdot|\cdot)$ can be conveniently estimated using classification algorithms, which is both theoretically and practically much easier than estimating the density functions themselves. 
		There is a rich literature on classification and density ratio estimation, with many powerful algorithms even in high dimensional settings. Further discussion of estimating the conditional density ratio is provided in \Cref{sec:alg} when we summarize our algorithm.
	\end{remark}
	
	The ability of $v(x,y)$ to separate $H_0$ and $H_1$ is established by the following lemma.
	\begin{lemma}[Separation of $H_0$ and $H_1$ by $v$ under equal $X$-marginal]\label{lem:separation}
		If $P_{1,X}=P_{2,X}$ and $v(x,y)=\frac{f_1(y|x)}{f_2(y|x)}$,
		then under $H_1$, $\mathbb E U = \frac{1}{2}-\frac{1}{4}\mathbb E|v(X_2,Y_2)-v(X_2',Y_2')|<\frac{1}{2}$ for all values of $n_{11}\ge 1$, where $(X_2,Y_2)$, $(X_2',Y_2')$ are iid realizations from $P_2$.
	\end{lemma}
	\Cref{lem:separation} can be viewed as a special case of \Cref{lem:weighted_conf}(c), for which a complete proof is provided in \Cref{sec:proof_main}.  Under $H_0$, $v(x,y)\equiv 1$.  Under $H_1$, the conditional likelihood ratio $v(X_2,Y_2)$ cannot be degenerate, so $\mathbb E|v(X_2,Y_2)-v(X_2',Y_2')|$ is strictly positive and hence we have $\mathbb E U<1/2$ under $H_1$. This also suggests a simple one sided rejection rule for our test.
	
	\begin{remark}\label{rem:choice_V}
		The choice of $v$ is motivated from an information theoretic perspective.
		It can be directly verified that 
		$\mathbb E_{P_1} v(X,Y)-\mathbb E_{P_2}v(X,Y)=\mathbb E_{P_1} \left[D_{\chi^2}(f_1(\cdot|X),f_2(\cdot|X))\right]$, where
		$D_{\chi^2}(f_1,f_2)=\int f_1^2/f_2-1$ is the Neyman's $\chi^2$ divergence between two densities $f_1,f_2$.
		As a result, $\mathbb E_{P_1} v(X,Y)-\mathbb E_{P_2}v(X,Y)\ge 0$ with equality holds if and only if $H_0$ is true. This suggests that under the alternative, $v(X,Y)$ tends to take larger values under $P_1$ than under $P_2$.
		A more involved argument is needed in order to carry over this intuition rigorously to analyze the rank of $V_{n_{11}+1}(Y_{21})$ in \eqref{eq:pi} and the continuous version in \eqref{eq:U}.  It is made clear in \eqref{eq:TV_link}, \Cref{sec:local_alt}, that the conformal $p$-value based on $v(x,y)=f_1(y|x)/f_2(y|x)$ is closely related to the expected conditional total variation distance between $P_1$ and $P_2$. 
	\end{remark}

	\subsection{Allowing for $P_{1,X}\neq P_{2,X}$ using weighted conformalization}
	Now we drop the assumption of equal marginal distribution of $X$ under $P_1$ and $P_2$. Recall the notation  $(X_{i},Y_{i})_{i=1}^{n_{11}+1}$, with $(X_i,Y_i)\sim P_1$ for $1\le i\le n_{11}$ and $(X_{n_{11}+1},Y_{n_{11}+1})\sim P_2$.  Now the $(n_{11}+1)$-tuple used to construct $U$ in \eqref{eq:U} are no longer exchangeable under the null hypothesis and the distribution of $U$ will in general not be uniform.  Here we use the ``weighted conformal prediction'' idea developed in \cite{tibshirani2019conformal} to obtain a modified version of $U$ with valid uniform sampling distribution under $H_0$. The key idea is to condition on a randomly permuted data sequence.
	
	In the subsequent discussion, we will focus on the conformity score functions $v$ that only depends on the last argument, and write $V_i=v(X_i,Y_i)$.
	
	We begin by imagining that the data $\mathbf Z=(X_i,Y_i)_{i=1}^{n_{11}+1}$ are stored in two parts: a randomly permuted sequence $\tilde{\mathbf Z}=(\tilde X_i,\tilde Y_i)_{i=1}^{n_{11}+1}$, and the permutation $\sigma:[n_{11}+1]\mapsto [n_{11}+1]$ with the correspondence $(X_i,Y_i)\leftrightarrow (\tilde X_{\sigma(i)},\tilde Y_{\sigma(i)})$.
	By construction, the vector $(V_i:1\le i\le n_{11}+1)$ is a deterministic function of $(\tilde{\mathbf Z},\sigma)$.  
	Given $\tilde{\mathbf Z}$, $V_{n_{11}+1}$ may take $n_{11}+1$ possible values, and $V_{n_{11}+1}=v(\tilde X_i,\tilde Y_i)$ if $\sigma(n_{11}+1)=i$.
	
	Now we are ready to derive the conditional distribution of $V_{n_{11}+1}$ given $\tilde{\mathbf Z}$, construct the uniformly distributed weighted conformal $p$-value, and establish its ability to separate $H_0$ and $H_1$.
	\begin{lemma}\label{lem:weighted_conf}
		\begin{enumerate}
			\item [(a)] Under $H_0$, for any choice of $v(x,y)$ we have $$(V_{n_{11}+1}|\tilde{\mathbf Z})\sim\sum_{i=1}^{n_{11}+1}p_i(\mathbf Z)\delta_{V_i}$$ with
			\begin{equation*}
			p_i(\mathbf Z) =
			\frac{\frac{f_{2,X}(X_{i})}{f_{1,X}(X_{i})}}{\sum_{l=1}^{n_{11}+1} \frac{f_{2,X}(X_{l})}{f_{1,X}(X_{l})}}\,,\quad i=1,...,n_{11}+1\,, \end{equation*}
			where $f_{k,X}(\cdot)$ denotes the marginal density function of $X$ under $P_k$ ($k=1,2$), and $\delta_{v}$ denotes the point mass at $v$.
			\item [(b)]  For any choice of $v(x,y)$, the randomized statistic
			\begin{equation}\label{eq:U_weighted}
			U = \sum_{i=1}^{n_{11}+1} p_i(\mathbf Z)\mathbbm{1}(V_i< V_{n_{11}+1}) + \zeta \sum_{i=1}^{n_{11}+1} p_i(\mathbf Z)\mathbbm{1}(V_i=V_{n_{11}+1})
			\end{equation}
			has a uniform distribution under $H_0$, where $\zeta$ is a $U(0,1)$ random variable independent of everything else.
			\item [(c)] Under $H_1$, if $v(x,y)=f_1(y|x)/f_2(y|x)$, there exist $\delta >0$ and $m_0>0$, depending only on $P_1$ and $P_2$, such that $\mathbb E U \le 1/2 - \delta$ when $n_{11} \ge m_0$.
		\end{enumerate}
	\end{lemma}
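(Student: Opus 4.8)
The plan is to dispatch parts (a) and (b) first, since they supply the exact conditional law and the identity that part~(c) then perturbs under the alternative. For part~(a) I would compute the law of $V_{m+1}$ given $\tilde{\mathbf Z}$ by a direct Bayes-type calculation on the ordered sample: conditioning on the unordered values $\tilde{\mathbf Z}$ reduces to a posterior over which of the $m+1$ values was produced by $P_2$, and the event $\{\sigma(m+1)=i\}$, i.e. $\{V_{m+1}=V_i\}$, has probability proportional to the joint density of the configuration placing the $P_2$-point at $i$. Relative to the all-$P_1$ reference $\prod_k f_1(X_k,Y_k)$, that configuration is reweighted by $f_2(X_i,Y_i)/f_1(X_i,Y_i)=\bigl(f_{2,X}(X_i)/f_{1,X}(X_i)\bigr)\cdot\bigl(f_2(Y_i|X_i)/f_1(Y_i|X_i)\bigr)$; under $H_0$ the second factor equals $1$, leaving weight $\propto f_{2,X}(X_i)/f_{1,X}(X_i)$, which after normalizing is $p_i(\mathbf Z)$. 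This is the weighted-exchangeability argument of \citet{tibshirani2019conformal} specialized to a single test point. For part~(b), writing $\Psi(v)=\sum_i p_i\bigl[\indc(V_i<v)+\tfrac12\indc(V_i=v)\bigr]$ for the $p$-weighted mid-CDF, the statistic $U$ is exactly the randomized probability-integral transform of the discrete law from (a): conditional on $\{V_{m+1}=V_j\}$, $U$ is uniform on an interval of length $\sum_i p_i\indc(V_i=V_j)$ placed just above $\sum_i p_i\indc(V_i<V_j)$, and as $V_j$ ranges over its support with probabilities $p_j$ these intervals tile $[0,1]$, so $U\mid\tilde{\mathbf Z}\sim U(0,1)$. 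In passing this yields the identity $\sum_i p_i\Psi(V_i)=\mathbb{E}[U\mid\tilde{\mathbf Z}]=1/2$, which I reuse below.

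For part~(c), the first step is to repeat the posterior computation under $H_1$ with $V=f_1/f_2$. Now the second reweighting factor is $f_2(Y_i|X_i)/f_1(Y_i|X_i)=1/V_i$ and does not cancel, so $V_{m+1}\mid\tilde{\mathbf Z}\sim\sum_i q_i\delta_{V_i}$ with $q_i\propto \bigl(f_{2,X}(X_i)/f_{1,X}(X_i)\bigr)/V_i\propto p_i/V_i$. Since the definition of $U$ still uses the weights $p_i$ through $\Psi$, taking $\mathbb{E}[\zeta]=1/2$ gives $\mathbb{E}[U\mid\tilde{\mathbf Z}]=\sum_i q_i\Psi(V_i)=\mathbb{E}_p[(1/V)\Psi(V)]/\mathbb{E}_p[1/V]$ with $\mathbb{E}_p[g]:=\sum_i p_i g(V_i)$. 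Rewriting as a covariance and using the identity from (b),
\begin{equation*}
\mathbb{E}[U\mid\tilde{\mathbf Z}] \;=\; \frac12 \;+\; \frac{\mathrm{Cov}_p\bigl(1/V,\ \Psi(V)\bigr)}{\mathbb{E}_p[1/V]}\,.
\end{equation*}
Because $v\mapsto 1/v$ is strictly decreasing while $v\mapsto\Psi(v)$ is non-decreasing, the Chebyshev sum inequality gives $\mathrm{Cov}_p(1/V,\Psi(V))\le 0$ deterministically; moreover $\Psi\in[0,1]$ forces the whole correction to lie in $[-1,0]$. This already delivers $\mathbb{E}U\le 1/2$.

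The substantive step is to upgrade this to a strict, $m$-uniform gap $\delta>0$, which I would do by passing to the population limit. The $p$-weighted empirical law $\sum_i p_i\delta_{(X_i,Y_i)}$ is dominated by the $m$ training points (the single $P_2$-point carries weight $O(1/m)$) reweighted by $f_{2,X}/f_{1,X}$, which tilts the $X$-marginal from $P_{1,X}$ to $P_{2,X}$; its limit is the law $\mu$ under which $X\sim P_{2,X}$ and $Y\mid X\sim f_1(\cdot\mid X)$. Along this limit $\mathbb{E}_p[1/V]\to\mathbb{E}_\mu[f_2/f_1]=1$, and with a weighted Glivenko--Cantelli statement $\Psi\to\Phi$ (the $\mu$-mid-CDF of $V$) uniformly, so the correction converges to $\mathrm{Cov}_\mu(1/V,\Phi(V))$. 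This limit is strictly negative exactly because under $H_1$ the variable $V=f_1/f_2$ is non-degenerate under $\mu$: on the positive-measure set where $f_1(\cdot|x)\ne f_2(\cdot|x)$ the ratio $f_1(Y|x)/f_2(Y|x)$ cannot be a.s. constant, since a constant ratio would force $f_1(\cdot|x)=f_2(\cdot|x)$ after integrating in $y$. Setting $2\delta:=-\mathrm{Cov}_\mu(1/V,\Phi(V))/\mathbb{E}_\mu[1/V]>0$ and invoking dominated convergence over $\tilde{\mathbf Z}$ (legitimate since the correction is bounded in $[-1,0]$) yields $\mathbb{E}U\le 1/2-\delta$ for all $m\ge m_0$.

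I expect the main obstacle to be the weighted empirical-process control in the last paragraph: establishing the weighted law of large numbers for $\mathbb{E}_p[1/V]$ and the uniform convergence $\Psi\to\Phi$ under only moment conditions on the density ratios $f_{2,X}/f_{1,X}$ and $f_2/f_1$, while correctly accounting for the inclusion of the test point and for ties through the $\tfrac12\indc(V_i=V_j)$ terms. Once the limit is identified, the boundedness of the correction in $[-1,0]$ makes the final passage from the conditional bound to $\mathbb{E}U\le 1/2-\delta$ routine.
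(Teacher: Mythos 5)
Your parts (a) and (b) coincide with the paper's proof: the same Bayes computation on the unordered data (following \citealp{tibshirani2019conformal}) and the same randomized probability-integral transform for discrete laws (the paper isolates the latter as \Cref{lem:disc_unif}). Part (c), however, takes a genuinely different and correct route. The paper never conditions on $\tilde{\mathbf Z}$ under $H_1$: it computes $2\mean_\zeta U$ directly as the ratio $\{\sum_{l\le m} g(X_l)[\indc(V_{1l}<V_2)+\indc(V_{1l}\le V_2)]+g(X_{m+1})\}/\{\sum_{l\le m}g(X_l)+g(X_{m+1})\}$, applies the pointwise SLLN and dominated convergence, then uses a change-of-variables identity (\Cref{lem:cov}, which converts $\mean\, g(X_1)\indc(V_1<V_2)$ into $\mean\, V_2'\indc(V_2'<V_2)$ with $V_2,V_2'$ both drawn under $P_2$) and an elementary two-line inequality for nonnegative mean-one non-degenerate variables (\Cref{lem:E_V_ind_V<V'}). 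You instead recompute the posterior under $H_1$ to get weights $q_i\propto p_i/V_i$, write $\mean[U\mid\tilde{\mathbf Z}]-1/2$ as a normalized covariance of the decreasing function $1/V$ with the nondecreasing weighted mid-CDF $\Psi$, and invoke the Chebyshev sum inequality. This buys something the paper does not have: the deterministic finite-$m$ bound $\mean[U\mid\tilde{\mathbf Z}]\le 1/2$ for every realization. The price is that your passage to the strict gap needs a weighted Glivenko--Cantelli statement ($\Psi\to\Phi$ uniformly, with envelope $g$ and care for the tie terms), which is heavier machinery than the paper's pointwise SLLN plus its two short auxiliary lemmas; you correctly flag this as the main technical burden. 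The two limits agree --- one can check $2\,\mean_\mu[(1/V)\Phi(V)]=\mean_{P_2}[V_2'\indc(V_2'<V_2)]+\mean_{P_2}[V_2'\indc(V_2'\le V_2)]$ --- and your non-degeneracy argument for $V$ under $\mu$ is sound, though the step ``a constant ratio forces $f_1(\cdot|x)=f_2(\cdot|x)$'' deserves the one extra line noting that the assumed equivalence of $P_1(\cdot|x)$ and $P_2(\cdot|x)$ forces the constant to equal $1$.
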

	The definitions of $U$ in \eqref{eq:U_weighted} and \eqref{eq:U} are compatible, as the construction with random tie-breaking in \eqref{eq:U} can be viewed as a special case of \eqref{eq:U_weighted} with $p_i(\mathbf Z)=(n_{11}+1)^{-1}$.
	
Part (a) of \Cref{lem:weighted_conf} is due to \cite{tibshirani2019conformal}, who first considered weighted conformal prediction under the covariate shift assumption.
	Part (b) is a simple consequence of part (a), which can be viewed as a discrete version of the CDF transform.  The most non-obvious part of the proof is that of part (c), which exploits the form of $v(x,y)=f_1(y|x)/f_2(y|x)$. The detailed proofs are given in \Cref{sec:proof_main}. 

	The conformal $p$-value $U$ will exhibit a constant difference from the null distribution once the ranking sample size $n_{11}$ exceeds a finite threshold.  Such a separation between the null and alternative hypotheses can be amplified using multiple such weighted conformal $p$-values, as we discuss in the next subsection. 
		
	\subsection{Incorporating multiple testing sample points for better power}
	So far we have only focused on obtaining a single conformal $p$-value from a single sample point in $P_2$.  Such a single $p$-value often has limited power in distinguishing $H_1$ from $H_0$.  To have a consistent test that rejects $H_0$ under the alternative hypothesis with probability tending to $1$ as the sample size increases to $\infty$, we must consider multiple testing sample points: $\{(X_{2j},Y_{2j}):1\le j\le n_{21}\}$, where $n_{21}$ is a subsample size whose relationship with the original sample size $n_2$ will be discussed later.
	
	Now assume that we have obtained estimates $\hat g$ and $\hat v$ for the marginal and conditional density ratios $g(x)\equiv f_{2,X}(x)/f_{1,X}(x)$ and $v(x,y)=f_1(y|x)/f_2(y|x)$, respectively.
	Given $(X_{1i},Y_{1i})_{i=1}^{n_{11}}$ from $P_1$ and $(X_{2j},Y_{2j})_{j=1}^{n_{21}}$ from $P_2$, we can repeat the procedure used to obtain $U$ in \eqref{eq:U} for each sample point $(X_{2j},Y_{2j})$ for $1\le j\le n_{21}$, resulting in $(\hat U_{j}:1\le j\le n_{21})$. If the function estimates $\hat g$ and $\hat v$ are accurate enough, then approximately each $\mathbb E \hat U_j = 1/2$  under $H_0$ and  $\mathbb E \hat U_j < 1/2$ under $H_1$.  However, these $\hat U_j$'s are dependent as they use the same set of ranking sample $(X_{1i},Y_{1i})_{i=1}^{n_{11}}$ from the first population. To obtain a valid $p$-value for one-sided mean test over the $\hat U_j$'s, we must take their dependence into account.  
	To this end, we re-define $\hat U_j$ as
	$$ \hat U_j = \frac{\frac{1}{n_{11}}\sum_{i=1}^{n_{11}} \hat g(X_{1i})\hat D_{ij}}{\frac{1}{n_{11}}\sum_{i=1}^{n_{11}} \hat g(X_{1i})}, ~~ j=1,\dots, n_{21} \,, $$
	where $\hat D_{ij}=\{\mathbbm{1}(\hat V_{1i} < \hat V_{2j}) + \zeta_j \mathbbm{1}(\hat V_{1i}=\hat V_{2j})\}$, $\hat V_{1i}=\hat v(X_{1i}, Y_{1i})$ and $\hat V_{2j} = \hat v(X_{2j}, Y_{2j})$.   Such a $\hat U_j$ can be viewed as an approximate plug-in version of the conformal $p$-value in \eqref{eq:U_weighted}, with estimated versions of $g$ and $v$, and omitting the vanishing terms $\hat g(X_{2j})/n_{11}$.
	
	The key observation is that despite the dependence due to a common ranking sample, the average of these $p$-values
	$n_{21}^{-1}\sum_{j=1}^{n_{21}}\hat U_j$ is a two-sample U-statistic conditioning on the estimated density ratios $\hat g$, $\hat v$, whose asymptotic distribution can be readily estimated using plug-in estimators.
	
	Formally, we use statistic
	\begin{equation}\label{eq:hat_T}
	\hat T=\frac{\frac{1}{2}-\frac{1}{n_{21}}\sum_{j=1}^{n_{21}}\hat U_j}{\hat\sigma/\sqrt{n_{11}}}
	\end{equation}
	where $\hat\sigma$ is the estimated asymptotic standard deviation of $\sqrt{n_{11}}n_{21}^{-1}\sum_{j=1}^{n_{21}}\hat U_j$.
	
	Let $\hat F_n$ be the empirical CDF of $\{\hat V_{2j}:1\le j\le n_{21}\}$, and $\hat F_{n,1/2}=(\hat F_n+\hat F_{n,-})/2$ where $F_-$ is the left limit of a function $F$.
	The asymptotic variance used in $\hat T$ can be estimated as follows.
	\begin{align}\label{eq:hat_sigma}
	\hat\sigma^2 = \hat\sigma_1^2+\frac{n_{11}}{12n_{21}}+\frac{1}{4}\hat\sigma_2^2-\hat\rho_{12}\,,  
	\end{align}
	where
	$\hat\sigma_1^2$ is the empirical variance of $\{\hat g(X_{1i})[1-\hat F_{n,1/2}(\hat V_{1i})]:1\le i\le n_{11}\}$,
	$\hat\sigma_2^2$ is the empirical variance of $\{\hat g(X_{1i}):1\le i\le n_{11}\}$,
	and $\hat\rho_{12}$ is the empirical covariance between $\{\hat g(X_{1i})[1-\hat F_{n,1/2}(\hat V_{1i})]:1\le i\le n_{11}\}$ and $\{\hat g(X_{1i}):1\le i\le n_{11}\}$. The derivation of the asymptotic variance is provided in \Cref{thm:null-shift} and \Cref{lem:hat_U_U_prime}.  
	
	\begin{remark}
		The asymptotic variance of $\hat T$ can be alternatively estimated using $\hat g(X_{2j})$ and $[1-\hat F_{n,1/2}(\hat V_{2j})]$ using an importance sampling technique.  This provides a larger sample size for asymptotic variance estimation.  Practically, we found using the harmonic mean of the original estimate and the importance sampling estimate to have good performance in simulations.  The details of the importance sampling estimate are given in \Cref{sec:asy_var_aux}.
	\end{remark}
	
	\subsection{The conformal conditional distribution test algorithm}\label{sec:alg}
	Given the ideas and methods presented in the previous subsections, we can now 
	describe the full testing procedure in \Cref{alg:shift} below.
	The algorithm assumes availability of two classification subroutines: (i) a marginal classification algorithm $\mathcal A_1$ that takes input two labeled samples $\{X_{1i}:n_{11}+1 \le i\le n_{1}\}$, $\{X_{2j}:n_{21}+1 \le j\le n_{2}\}$ with sample sizes $n_{12}=n_1-n_{11}$, $n_{22}=n_2-n_{21}$ respectively, and outputs an estimate of the marginal density ratio $g=f_{2,X}/f_{1,X}$;
	and (ii) a joint classification algorithm $\mathcal A_2$ that takes input two labeled samples $\{(X_{1i},Y_{1i}):n_{11}+1\le i\le n_{1}\}$ and $\{(X_{2j},Y_{2j}):n_{21}+1 \le j\le n_{2}\}$, and outputs an estimate of the conditional density ratio $v=f_1(y|x)/f_2(y|x)$.  Our numerical experiments use a equal split ratio: $(n_{11},n_{21})=(n_1/2, n_2/2)$, which yields reasonable performance in all the scenarios considered.
	
	Given $\mathcal A_1$, $\mathcal A_2$, the testing procedure first splits the sample, applying the density ratio estimation subroutines $\mathcal A_1$, $\mathcal A_2$ on one part to obtain approximate versions of the density ratios.  Then the other part is used to obtain the final test statistic $\hat T$.

	\begin{algorithm}
		\caption{\label{alg:shift} Two-sample test of conditional distribution}
		\begin{algorithmic}
			\REQUIRE Training data $(X_{1i}, Y_{1i})_{i=1}^{n_1}$; testing data $(X_{2j}, Y_{2j})_{j=1}^{n_2}$; density ratio estimation subroutines $\alg_1$, $\alg_2$
			\STATE For $k=1,2$, randomly split $\{1, \dots, n_k\}$ into subsets $\sdata_{k1}=\{1, \dots, n_{k1}\},~ \sdata_{k2}=\{n_{k1}+1, \dots, n_k\}$
			\STATE $\hat{g}(\cdot) = \alg_1[\{X_{1i}, i \in \sdata_{12}, X_{2j}, j \in \sdata_{22}\}]$
			\STATE $\hat{v}(\cdot,\cdot) = \alg_2[\{(X_{1i}, Y_{1i}), i \in \sdata_{12}, (X_{2j}, Y_{2j}), j \in \sdata_{22}\}]$
			\FOR{$j\in\mathcal I_{21}$}
			\STATE Generate $\zeta_{j} \sim U(0,1)$, independent of everything else
			\STATE $\hat{D}_{ij} = \mathbbm{1}(\hat{V}_{1i} < \hat{V}_{2j})+\zeta_j \mathbbm{1}(\hat{V}_{1i}=\hat{V}_{2j})$, where $\hat V_{ki}=\hat v(X_{ki},Y_{ki})$
			\STATE $\hat{U}_j = \sum_{i=1}^{n_{11}}\hat{g}(X_{1i}) \hat{D}_{ij} /\sum_{i=1}^{n_{11}}\hat{g}(X_{1i})$
			\ENDFOR
			\STATE $\hat T =(1/2 -n_{21}^{-1}\sum_{j=1}^{n_{21}}\hat{U}_j)/(\hat{\sigma}/\sqrt{n_{11}})$, where $\hat \sigma^2$ is given in \eqref{eq:hat_sigma} 
			\STATE Reject $H_0$ if $\hat T \geq \Phi^{-1}(1-\alpha)$ ($\Phi$ is the CDF of $N(0,1)$, $\alpha$ is the nominal type I error level)
		\end{algorithmic}
	\end{algorithm}
	
	
	\paragraph{Simplification when the marginals are equal.}  Sometimes it is plausible to assume that the marginal distributions $f_{1,X}$, $f_{2,X}$ are equal.  This can happen, for example, when the sampling schemes and environments of $X$ are known to be the same, or they come from the same experimental design.  In this scenario, the algorithm becomes much simpler, as we know that $f_{1,X}/f_{2,X}\equiv 1$. As a result, the algorithm does not need to use the marginal density ratio subroutine $\mathcal A_1$ and $g \equiv 1$.
	
	
	\paragraph{Choice of classification algorithms.} As mentioned in \Cref{rem:est_V}, our method does not require estimating the densities $f_{k,X}(\cdot)$ or conditional densities $f_{k}(\cdot|\cdot)$ for $k=1,2$.  Instead, it only requires estimating the marginal density ratio $f_{1,X}(x)/f_{2,X}(x)$, and the conditional density ratio $f_{1}(y|x)/f_2(y|x)$ only needs to be estimated up to a monotone transform, since only the ranking information is needed in the test statistic.  Estimating density ratios is often easier than estimating the density functions themselves, and has been well studied in the statistics and machine learning literature, including moment matching approach \citep{gretton2009covariate}, the ratio matching approach \citep{sugiyama2008direct,kanamori2009least,tsuboi2009direct}, and probabilistic classification approach \citep{qin1998inferences,cheng2004semiparametric,bickel2007discriminative}. 
	
	Our algorithm can be implemented with any available density ratio estimators.
	Here we provide some further detail about the probabilistic classification estimator due to its simplicity.   In the case of $f_{1,X}=f_{2,X}$, we only need to consider a single classification problem over the joint distribution $(X,Y)$, where class ``$1$'' represents the subsample $\{(X_{1i},Y_{1i}):i\in \mathcal I_{12}\}$, and class ``$2$'' represents the subsample $\{(X_{2j},Y_{2j}):j\in \mathcal I_{22}\}$.   Let $\eta(x,y)$ be the true conditional probability $P(1|x,y)$, then $\eta(x,y)/(1-\eta(x,y))\propto f_1(x,y)/f_2(x,y)$, which also equals $f_{1}(y|x)/f_2(y|x)$ since $f_{1,X}=f_{2,X}$.   When $f_{1,X}\neq f_{2,X}$, we can consider an additional classification problem using only $X$. Let $\eta(x)=P(1|x)$, then $f_{2,X}(x)/f_{1,X}(x)= n_{12}(1-\eta(x))/(n_{22}\eta(x))$.  With probabilistic classifiers providing $\hat\eta(x,y)$ and $\hat\eta(x)$, the corresponding joint and marginal density ratios can be estimated by plugging in $\hat\eta(x,y)$ and $\hat\eta(x)$.  The conditional density ratio can be obtained by taking a further ratio between the joint and marginal density ratios.
	Many commonly used classification methods offer a probability output, including the classical linear and quadratic discriminant analysis, logistic regression, popular machine learning algorithms such as random forest and support vector machines \citep{sollich2000probabilistic}, and modern deep neural nets.
	
	\section{Asymptotic Properties}\label{sec:thm}
	
	In this section, we investigate the theoretical properties of the testing procedure described in \Cref{alg:shift} under (i) a standard fixed population asymptotic framework, and (ii) a local alternative perspective.
	Since the test statistic is constructed from the ranking subsamples, we assume that the two ranking subsample sizes are proportional: $n_{11}/n_{21}$ stays bounded and bounded away from $0$ as $n_{11}\rightarrow\infty$.
	The asymptotic behavior of our test will depend on the estimated functions $\hat g$ and $\hat v$, which depends on the fitting sample sizes $(n_{12},n_{22})$.  It is natural to have $(n_{12}, n_{22})$ increasing with the ranking sample size $(n_{11},n_{21})$.  We quantify the required accuracy of the density ratio estimates $\hat g$, $\hat v$ in \Cref{ass:accuracy_h_g} below.
	Designing a high quality density ratio estimator is a rich and context-dependent topic, and is beyond the scope of this paper.
	
	\subsection{Fixed population asymptotics}
	We first consider the classical setting where the two distributions $P_1$, $P_2$ are fixed, and study the limiting behavior of the test statistic as the ranking sample size $n_{11}$  grows to infinity and $n_{11}/n_{21}$ stays bounded and bounded away from $0$.  A local alternative analysis with varying signal strength is presented in \Cref{sec:local_alt} below.
	
	Recall that we use the notation $g(x)=f_{2,X}(x)/f_{1,X}(x)$ and $v(x,y)=f_1(y|x)/f_2(y|x)$.  
	For $k=1,2$, let $G_{ki}=g(X_{ki})$ and $V_{ki} = v(X_{ki}, Y_{ki})$. Let $D_{ij}=\indc(V_{1i}<V_{2j})+\zeta_j\indc(V_{1i}=V_{2j})$ where $\zeta_j$'s are auxiliary $U(0,1)$ random variables independent of everything else. Define $\hat G_{ki}$, $\hat V_{ki}$ and $\hat D_{ij}$ similarly as $G_{ki}$, $V_{ki}$ and $D_{ij}$ using the estimated functions $\hat g$, $\hat v$.  For a random variable $Z$ and constant $q>0$, $\|Z\|_q$ denotes the $\ell_q$ norm of $Z$: $\|Z\|_q^q=\mathbb E(|Z|^q)$.  Much of our analysis will involve the estimation errors in $\hat g$, $\hat v$ reflected through the random variables $\hat G_{11}-G_{11}$ and $\hat D_{11}-D_{11}$.  We use $\mathbb E_*(\cdot)$, ${\rm Var}_*(\cdot)$, and $\|\cdot\|_{q,*}$ to denote the conditional expectation, variance, and $\ell_q$ norm given the density ratio estimates $\hat v$, $\hat g$ (or, equivalently, given the fitting subsample). For example, $\mathbb E_*(\hat v(X_1, Y_1)) = \mathbb E (\hat v(X_1, Y_1)|\hat v)$, and $\|\hat G_{11}-G_{11}\|_{2,*}=[\mathbb E(\hat g(X_{11})-g(X_{11}))^2|\hat g]^{1/2}$.
	
	Our first assumption puts some moment conditions on the marginal density ratio $g(\cdot)$.
	\begin{assumption}\label{ass:moment}
		The marginal likelihood ratio $g(x)=f_{2,X}(x)/f_{1,X}(x)$ satisfies
		$\|G_{11}\|_2<\infty$.
	\end{assumption}
	
	Our next assumption is on the asymptotic accuracy of the density ratio estimators.
	\begin{assumption}\label{ass:accuracy_h_g}
		\begin{enumerate}
			\item  [(a)] $\|\hat G_{11}-G_{11}\|_{2,*}=o_P(1)$.
			\item [(b)] $\left|\mathbb E_*(\hat G_{11}-G_{11})\hat D_{11} - \mathbb E_*(\hat G_{11}-G_{11}) \mathbb{E}_*(G_{11}\hat D_{11})\right| = o_P(1/\sqrt{n_{11}}).$
		\end{enumerate}
	\end{assumption}
To help understand the notation, consider \Cref{ass:accuracy_h_g}(a) for example.  By construction and the definition of $\|\cdot\|_{2,*}$, we have
$\|\hat G_{11}-G_{11}\|_{2,*}=[\mathbb E(\hat g(X_{11})-g(X_{11}))^2|\hat g]^{1/2}$, which is a random quantity, whose randomness comes from $\hat g$, which is itself a function of the fitting subsamples. Therefore, \Cref{ass:accuracy_h_g}(a) requires that with high probability over the randomness of the fitting subsamples, the estimate $\hat g$ is close to $g$ when their distance is measured using the $\ell_2$ norm of $\hat g-g$ under $f_{1,X}$.  Notation in part (b) of \Cref{ass:accuracy_h_g} is interpreted accordingly.

    \Cref{ass:accuracy_h_g}(a) requires consistent estimation of the marginal density ratio $f_{1,X}(x)/f_{2,X}(x)$, which is mild.  \Cref{ass:accuracy_h_g}(b) deserves further discussion, which is deferred after the presentation of the main theorem on the asymptotic behavior of the test statistic output by \Cref{alg:shift}.
	\begin{theorem}\label{thm:null-shift}
		Suppose that Assumptions \ref{ass:moment} and \ref{ass:accuracy_h_g} hold. The test statistic $\hat T$ output by \Cref{alg:shift} converges in distribution to the standard normal as $n_{11}\to \infty$ under $H_0$.	
		
		Let $\Delta=-\mathbb E_*(\hat G_{11}-G_{11})\hat D_{11} + \mathbb E_*(\hat G_{11}-G_{11}) \mathbb{E}_*(G_{11}\hat D_{11})+\mathbb E_* G_{11}(\hat D_{11}-D_{11})$.
		If there exists a constant $c>0$ such that
		$$\mathbb P\left[\Delta <(1/4)\mathbb E|v(X_2,Y_2)-v(X_2',Y_2')|-c\right]\rightarrow 1\,,$$ where $(X_2,Y_2)$ and $(X_2',Y_2')$ are iid realizations from $P_2$, then under $H_1$, $\hat T\rightarrow\infty$ in probability.
	\end{theorem}
	
	
Now we discuss \Cref{ass:accuracy_h_g}(b), which is needed to ensure that the approximation error in the estimated weights does not break the central limit theorem.  Consider the following two scenarios.
    \begin{enumerate}
    	\item $\sqrt{n_{11}}\|\hat G_{11}-G_{11}\|_{2,*}=o_P(1)$.  In this case \Cref{ass:accuracy_h_g}(b) follows immediately from boundedness of $\hat D$ and Cauchy-Schwartz, regardless of $\hat v$, the estimated conditional density ratio.  This reflects the typical validity guarantee for conformal methods: when the weights are accurate enough, the type I error is always controlled for all conformity score functions.  However, in order to achieve such a convergence rate of $\|\hat G_{11}-G_{11}\|_{2,*}$, we typically will need the fitting subsample size $n_{12}$ to be much larger than the ranking subsample size $n_{11}$.
    	In the special case where we have side information $f_{1,X}=f_{2,X}$, then there is no need to estimate $\hat g$, and $\hat g\equiv1$ and  \Cref{ass:accuracy_h_g} holds trivially. 
    	\item $\sqrt{n_{11}}\|\hat G_{11}-G_{11}\|_{2,*}$ does not converge to $0$ in probability. This is more likely to be the case when $n_{12}\asymp n_{11}$, and is of major practical interest.  In this case, the convergence of $\hat G_{11}-G_{11}$ alone is not enough to control the approximation error in the weights.  In order for \Cref{ass:accuracy_h_g}(b) to hold we will need the random variable $\hat D_{11}$, which involves the estimated conditional density ratio $\hat v$, to behave reasonably.  Specifically, after some simple algebra, the left hand side of \Cref{ass:accuracy_h_g}(b) equals $$|\mathrm{Cov}_*(\hat G_{11}-G_{11}, \hat D_{11})-\mathbb E_*(\hat G_{11}-G_{11})\mathrm{Cov}_*(G_{11},\hat D_{11})|\,,$$
		which is upper bounded by (ignoring constant factors) $\rho\|\hat G_{11}-G_{11}\|_{2,*}$, with
    	$$
        \rho = \max\left(|{\rm Corr}_*(\hat G_{11}-G_{11}, \hat D_{11})|,~|{\rm Corr}_*(G_{11}, \hat D_{11})|\right)\,.
    	$$ There is good reason to expect ${\rm Corr}_*(\hat G_{11}-G_{11}, \hat D_{11})$ and ${\rm Corr}_*(G_{11}, \hat D_{11})$ to be close to $0$, because for a good estimate $\hat v$, the randomness in $\hat v(X_{11},Y_{11})$ will be mostly from the conditional randomness of $Y_{11}$ given $X_{11}$, which is independent of $X_{11}$ itself. But both $\hat G_{11}$ and $G_{11}$ are functions of $X_{11}$, so we should expect $\hat v(X_{11},Y_{11})$ (and hence $\hat D_{11}$) to be nearly independent of $\hat G_{11}$ and $G_{11}$.  Consider a simple Gaussian mean shift example, where $X_{11}\sim N(-\delta/2,1)$, $(Y_{11}|X_{11})\sim N(X_{11}-\mu/2, 1)$, $X_{21}\sim N(\delta/2, 1)$, $(Y_{21}|X_{21})\sim N(X_{21}+\mu/2,1)$. Suppose all functions are estimated using the parametric maximum likelihood, then $\hat D_{11}=\mathds{1}[\hat\mu(Y_{11}-X_{11})>\hat\mu(Y_{21}-X_{21})]$, which is independent of $X_{11}$, because, by construction and joint Gaussianity, $Y_{11}-X_{11}$ is independent of $X_{11}$.  This example can be extended to more complex versions, such as heteroskedastic responses.  We expect that a weak dependence between $\hat D_{11}$ and $X_{11}$ also holds true for other good conditional density ratio estimators.  In our numerical experiments, we observe that such weak dependence assumption is indeed plausible in many settings. See \Cref{sec:error_quantities} for detailed empirical evidences.  In practice, if we are confident about the density ratio estimate, such as when using a correctly specified parametric model, then an equal-sized sample split for fitting and ranking is recommended.  Otherwise, one could use a larger sample size for fitting and a smaller sample size for ranking.  We report empirical results for different fitting-ranking split ratios in \Cref{sec:other_split_ratio}.
    \end{enumerate}

	
	The asymptotic power guarantee  only requires $\hat D_{11}$ and $\hat G_{11}$ to be within a constant distance from their corresponding population versions.  This is because when there is a constant separation between the data distribution and the null model, a small constant distortion in the test statistic will not remove all the signal.  Below we provide a more delicate analysis under a local alternative framework.
	
	\subsection{A local alternative analysis}\label{sec:local_alt}
	Now we consider a local alternative scenario such that the two joint distributions $(P_1,P_2)$ may change with the sample size $(n_{11},n_{21})$.  This provides a more refined view of different sources of the approximation error and the accumulation of signal strength when the sample size increases.  
	
	Following \Cref{thm:null-shift} and \Cref{lem:weighted_conf}(c),
	we use the following quantity to quantify the deviation from the null, 
	$$
	\delta=\delta(P_1,P_2)=\mathbb E|v(X_2,Y_2)-v(X_2',Y_2')|\,,
	$$
	where $(X_2,Y_2),(X_2',Y_2')\stackrel{iid}{\sim} P_2$.  By construction $\mathbb E v(X_2,Y_2)=1$ and $\delta=0$ if and only if $\mathbb P( v(X_2,Y_2)=1)=1$, which is equivalent to $H_0$.  A larger value of $\delta$ indicates the conditional density ratio $f_1(Y|X)/f_2(Y|X)$ is likely to be far away from $1$.  In fact, using the triangle inequality and Jensen's inequality we can show that
	\begin{equation}\label{eq:TV_link}
	\mathbb E_{X\sim f_{2,X}} D_{\rm tv}(f_1(\cdot|X),f_2(\cdot|X))=\frac{1}{2}\mathbb E|v(X_2,Y_2)-1|\in [\delta/4,\delta/2]\,,
	\end{equation}
	where $D_{\rm tv}$ is the total variation distance between two distributions.  A detailed proof of this claim is given in \Cref{sec:proof_main}.
	
	Under a slightly stronger technical condition than those in \Cref{thm:null-shift}, we have the following local alternative result.
	\begin{proposition}\label{pro:local_alt}
		In addition to Assumptions \ref{ass:moment} and \ref{ass:accuracy_h_g}, suppose that $\left|\mathbb E_* G_{11}(\hat D_{11}-D_{11})\right|=o_P(n_{11}^{-1/2})$, $\hat v(X_{21},Y_{21})-v(X_{21},Y_{21})=o_P(1)$, and $v(X_{21},Y_{21})$ has a continuous distribution with bounded density.
		Then, we have
		\begin{align*}
		\hat T = \frac{\sqrt{n_{11}}\delta}{4\sigma}(1+o_P(1))+Z+o_P(1)\,,
		\end{align*}
		where $\sigma$ is the population version of $\hat \sigma$ in \eqref{eq:hat_sigma} using the true density ratios $g$ and $v$, and $Z\rightsquigarrow N(0,1)$ as $n_{11}\rightarrow\infty$. 
	\end{proposition}{}
	The most non-trivial additional assumption here is $\left|\mathbb E_* G_{11}(\hat D_{11}-D_{11})\right|=o_P(n_{11}^{-1/2})$, whereas only a constant error bound is required in \Cref{thm:null-shift} for the test to be powerful against a constant alternative.  This is because the local alternative is very close to the null, while in the fixed population analysis considered in \Cref{thm:null-shift}, the difference between the null and alternative is much larger.  This more stringent assumption can still be realistic for the same reason explained in the discussion after \Cref{thm:null-shift}. Numerical evidences are provided in \Cref{sec:error_quantities}.
	The continuity of $v(X_2,Y_2)$ and bounded density allow us to provide more refined control on the difference between indicators $\hat D_{11}-D_{11}$.

	\Cref{pro:local_alt} suggests the following  local asymptotic behavior of our test statistic.
	\begin{enumerate}
		\item If $\delta\sqrt{n_{11}}\rightarrow\infty$, then $\hat T\rightarrow\infty$ in probability.
		\item If $\delta\sqrt{n_{11}}\rightarrow a\in[0,\infty)$, then $\hat T\rightsquigarrow N(a/(4\sigma), 1)$.
	\end{enumerate}
	
	As a result, in the asymptotic regime considered here, the power is mostly determined by $\delta\sqrt{n_{11}}$.
	
	\section{Simulation study}\label{sec:sim}
	
	In this section, we illustrate the performance of our method in several simulation settings. 
	For brevity, we focus on the more challenging and interesting case where the $X$-marginals are different.
	Denote $x_{i}=(x_{i}(1), \dots, x_{i}(p))^{\t}$, and $I_p$ is a $p \times p$ identity matrix. 
	We first consider three prototypical regression models with $p=5$ that are similar to those in \citet{lei2018distribution} and \citet{zheng2000consistent}. A higher dimensional case is presented in \Cref{sec:sim_hi}.
	
	{\bf Model A} (Gaussian, linear). Let $y_{\ell i}=\alpha_{\ell} + \beta^{\t}x_{\ell i} + \epsilon_{\ell i}$, $i = 1, \dots, n_{\ell}, \ell=1,2$, where $x_{1i} \stackrel{iid}{\sim} N(\mathbf{0}, I_p)$, $x_{2i} \stackrel{iid}{\sim} N(\mu, I_p)$ where $\mu=(1,1,-1,-1,0)^{\t}$, and $\epsilon_{1i}, \epsilon_{2i} \stackrel{iid}{\sim} N(0, 1)$, independent of the features. Set $\alpha_1=\alpha_2=0$ under the null and $\alpha_1=0, \alpha_2=0.5$ under the alternative.
	\vspace{0.05in}
	
	{\bf Model B} (Gaussian mixture, nonlinear, heavy-tailed). Let $y_{\ell i}= \alpha_{\ell} + \beta_{1}x_{\ell i}(1) + \beta_{2}x_{\ell i}(2) + \beta_{3}x_{\ell i}^2(3) + \beta_{4}x_{\ell i}^2(4) + \beta_{5}x_{\ell i}^3(5) + \epsilon_{\ell i}, i=1,\dots,n_{\ell}, \ell=1,2$, where $x_{1i} \stackrel{iid}{\sim} 0.5 N(\mathbf{0}, I_p) + 0.5 N(\mu, I_p)$, $x_{2i} \stackrel{iid}{\sim} 0.5 N(\mathbf{0}, I_p) + 0.5 N(\mathbf{0}, 1.5I_p)$ where $\mu=(0.5,0.5,-0.5,-0.5,0)^{\t}$, and $\epsilon_{1i}, \epsilon_{2i} \stackrel{iid}{\sim} t(5)$, the student's $t$-distribution with 5 degrees of freedom, independent of the features. Set $\alpha_1=\alpha_2=0$ under the null and $\alpha_1=0, \alpha_2=0.5$ under the alternative.
	\vspace{0.05in}
	
	{\bf Model C} (Gaussian mixture, additive spline, heteroskedastic). Let $y_{\ell i}= \theta(x_{\ell i}) + \epsilon_{\ell i}, i=1,\dots,n_{\ell}, \ell=1,2$, where $\theta(x) = \mathbb E(y|x)$ is an additive function of B-splines of covariates, $x_{1i} \stackrel{iid}{\sim} 0.5 N(\mathbf{0}, I_p) + 0.5 N(\mu, I_p)$, $x_{2i} \stackrel{iid}{\sim} 0.5 N(\mathbf{0}, I_p) + 0.5 N(\mathbf{0}, 1.5I_p)$ where $\mu=(0.5,0.5,-0.5,-0.5,0)^{\t}$. Set $\epsilon_{\ell i} \stackrel{}{\sim} N(0, 4/(1+x_{\ell i}^2(1))), \ell=1,2$, under the null and $\epsilon_{1i} \stackrel{}{\sim} N(0, 4/(1+x_{1i}^2(1)))$, $\epsilon_{2i} \stackrel{}{\sim} N(0, 2/(1+x_{2i}^2(1)))$ under the alternative. Here the noises are not independent of the covariates.
	\vspace{0.05in}
	
	In order to make density ratio estimation stable, we remove sample points whose marginal density ratio $f_{1,X}(x)/f_{2,X}(x)$ or the joint density ratio $f_1(x,y)/f_2(x,y)$ are outside of the interval $[1/100,100]$.
	
	\subsection{The low dimensional case}
	
	We first consider low-dimensional cases with $p=5$, setting the entries of $\beta$ to $\pm 1$ with random signs in Models A and B. In Model C, we multiply the coefficients to the B-spline transformation of predictors: $\theta(x) = \sum_{j=1}^5 \sum_{l=1}^4 \beta_{jl} b_l(x(j))$, where $b_l$'s are B-spline functions and $\beta_{jl}=\pm 1$ with randomly chosen signs.
	We consider sample sizes $n_1=n_2\in\{200, 500, 1000, 2000\}$ and randomly split each sample into two equal-sized subsets for the fitting and ranking steps.  Results for other split ratios are deferred to \Cref{sec:other_split_ratio}.
	In addition, we use four different probabilistic classification methods including linear logistic (LL), quadratic logistic (QL), neural network (NN) and kernel logistic regression (KLR).  Let $L$ be the class label such that $L_{1i}=1$ for $i\in \mathcal I_{12}$ and $L_{2i}=0$ for $i \in \mathcal I_{22}$. The KLR method \citep{zhu2005kernel} learns a kernel logistic regression classifier by minimizing
	$$ -\sum_{k\in\{1,2\}}\sum_{i \in \mathcal{I}_{k2}}\left[ L_{ki} \theta(x_{ki}; \boldsymbol \beta) - \log\{1+\exp(\theta(x_{ki};\boldsymbol \beta))\}\right] + \frac{\lambda}{2} \|\theta\|_{\mathcal{H}_K}^2 \,, $$
	where $\mathcal{H}_K$ is the reproducing kernel Hilbert spaces (RKHS) generated by the kernel $K(x,y) = \exp(-\|x-y\|^2/\sigma^2)$, $\theta(x;\boldsymbol \beta) = \beta_0 + \sum_{k\in\{1,2\}}\sum_{i \in \mathcal{I}_{k2}} \beta_{ki} K(x_{ki}, x)$ and $\eta(x;\boldsymbol \beta)=P(L=1|x) = 1/[1+\exp\{-\theta(x;\boldsymbol \beta)\}]$. Then we obtain the marginal ratio estimator $\hat{g}(x) = n_{12}\{1-\eta(x;\hat{\boldsymbol \beta})\}/\left\{n_{22}\eta(x;\hat{\boldsymbol \beta})\right\}$. 
	The joint classifier is obtained similarly using $(x_i,y_i)$.
	For the KLR method, the tuning parameter $\sigma^2=200$ is used in all settings. The more important tuning parameter $\lambda$ is chosen by minimizing the out-of-sample cross entropy loss. For large sample sizes such data-driven tuning is time-consuming to run for all repetitions. To reduce the overall running time for large sample sizes ($n_2=1000,~2000$), we use data-driven out-of-sample cross entropy loss to select $\lambda$ in first 20 repetitions, and then use the median of these 20 $\lambda$ values for the rest of the simulation. 
	Specifically, when $n_2=1000,2000$, we use $\lambda=0.05$ for both joint and marginal ratio estimates in Model A, $\lambda=0.0002$ for joint estimates and $\lambda=0.015$ for marginal estimates in Model B, $\lambda=0.015$ for joint estimates and $\lambda=0.02$ for marginal estimates in Model C.
	For the neural network method, we use the sigmoid activation function and the stochastic gradient descent algorithm, and two hidden layers are used in all considered models. 
	Moreover, we compare with the parametric likelihood ratio test (LRT) for hypotheses $H_0: \alpha_1 = \alpha_2$ versus $H_1: \alpha_1 \neq \alpha_2$ under the model $(Y_{\ell} | X) \sim N(\alpha_{\ell} + \beta^{\t}X, \sigma^2), \ell=1, 2$. Such a model is correctly specified under model A, where we expect LRT to have the best performance.  Under models B, C, this is misspecified.
	The code for the simulation is available to ensure reproducibility.

Among the four classification methods and three model settings considered, LL and QL rely on parametric models, which is correctly specified under model A, but incorrectly specified under models B and C, where the marginal distributions of $X$ are Gaussian mixtures, and the corresponding density ratios cannot be expressed as a logit transform of second order polynomials of $X$.  Moreover, the noise distribution is non-Gaussian under model B, and Gaussian but heteroskedastic in model C. The KLR and NN methods do not rely on any parametric model specification.  We summarize the model specification scenarios in \Cref{tab:model_specification} below.
\begin{table}[!ht]
	\centering
	\caption{\label{tab:model_specification} The model misspecification for all methods: ``\checkmark'' means the estimator uses a correctly specified parametric model to estimate the (conditional) density ratio; ``\ding{55}'' means that the estimator uses a misspecified parametric model to estimate the (conditional) density ratio; ``$-$'' means that the estimator is non-parametric and does not rely on any parametric model specification.}
	\begin{tabular}{|c|ccccc|ccccc|}
		\hline
	Task	& \multicolumn{5}{c|}{$\hat g$ } &\multicolumn{5}{c|}{$\hat v$}\\
	Method  & LRT& LL & QL & KLR & NN         & LRT& LL & QL & KLR & NN \\ 
	\hline
	Model A & $\checkmark$ &$\checkmark$ &\checkmark & $-$ & $-$ &$\checkmark$ &$\checkmark$& $\checkmark$ & $-$ & $-$\\
	Model B & \ding{55} &\ding{55} &\ding{55} & $-$ & $-$ &\ding{55} &\ding{55}& \ding{55} & $-$ & $-$\\
	Model C & \ding{55} &\ding{55} &\ding{55} & $-$ & $-$ &\ding{55} &\ding{55}& \ding{55} & $-$ & $-$\\
\hline
	\end{tabular}
\end{table}

	\begin{table}[!ht]
		\centering
		\caption{\label{tab:sim}Percentage of rejections over 500 repetitions using methods LL, QL, NN and KLR under the split ratio $0.5$ and the likelihood ratio test (LRT) with $\alpha=0.05$.}
		\begin{tabular}{|c|c|ccccc|ccccc|}
			\hline
			\multicolumn{2}{|c|}{\multirow{2}{*}{}} & \multicolumn{5}{c|}{Null} & \multicolumn{5}{c|}{Alternative} \\ \cline{3-12}
			\multicolumn{2}{|c|}{} & LL & QL & NN & KLR & LRT & LL & QL & NN & KLR & LRT \\ 
			\hline
			\multirow{4}{*}{Model A} & $n_2=200$ & 0.038 & 0.022 & 0.066 & 0.038 & 0.054 & 0.354 & 0.082 & 0.416 & 0.406 & 0.992 \\ 
			& $n_2=500$ & 0.044 & 0.020 & 0.058 & 0.066 & 0.054 & 0.654 & 0.392 & 0.644 & 0.698 & 1.000 \\ 
			& $n_2=1000$ & 0.042 & 0.038 & 0.058 & 0.056 & 0.032 & 0.898 & 0.770 & 0.866 & 0.912 & 1.000 \\ 
			& $n_2=2000$ & 0.048 & 0.056 & 0.068 & 0.068 & 0.038 & 0.980 & 0.974 & 0.990 & 0.990 & 1.000 \\ \hline
			\multirow{4}{*}{Model B} & $n_2=200$ & 0.044 & 0.058 & 0.052 & 0.062 & 0.054 & 0.200 & 0.116 & 0.180 & 0.224 & 0.236 \\ 
			& $n_2=500$ & 0.058 & 0.046 & 0.076 & 0.062 & 0.040 & 0.456 & 0.268 & 0.470 & 0.506 & 0.478 \\ 
			& $n_2=1000$ & 0.056 & 0.074 & 0.024 & 0.024 & 0.044 & 0.726 & 0.586 & 0.722 & 0.802 & 0.764 \\ 
			& $n_2=2000$ & 0.066 & 0.062 & 0.038 & 0.026 & 0.080 & 0.946 & 0.936 & 1.000 & 0.994 & 0.970 \\ \hline 
			\multirow{4}{*}{Model C} & $n_2=200$ & 0.080 & 0.066 & 0.056 & 0.064 & 0.090 & 0.074 & 0.204 & 0.062 & 0.300 & 0.088 \\ 
			& $n_2=500$ & 0.070 & 0.036 & 0.066 & 0.040 & 0.096 & 0.080 & 0.624 & 0.504 & 0.796 & 0.116 \\ 
			& $n_2=1000$ & 0.098 & 0.046 & 0.030 & 0.064 & 0.142 & 0.088 & 0.964 & 0.900 & 0.988 & 0.144 \\ 
			& $n_2=2000$ & 0.110 & 0.050 & 0.080 & 0.064 & 0.182 & 0.094 & 0.998 & 1.000 & 1.000 & 0.176 \\ \hline 
		\end{tabular}
	\end{table}

	All the simulation results in the following are computed over 500 repetitions with nominal type I error level $\alpha = 0.05$. 
	The results are summarized in Table \ref{tab:sim}.
	In Model A, the LL estimator is the correctly specified parametric method, and has very good performance even with small sample sizes.
	QL has relatively inaccurate empirical size for very small sample size $n_2=200$, because this model involves more parameters including all the linear and quadratic terms.
	The NN and KLR estimators are fully nonparametric, but still yield satisfying control of the type I error even under moderate sample sizes. 
	When the alternative hypothesis is true, the power increases as the sample size increases.
	The NN and KLR methods yield comparable or larger power against the LL approach even when the sample size is small, thanks to accurately estimated density ratios $v$ and $g$. The QL estimator requires a larger sample size to obtain reasonable power.  The performance using the estimated $\hat v, \hat g$ is indeed close to using the true functions $v, g$, which is presented in Table \ref{tab:sima}.
	
	Models B and C represent more challenging scenarios where the the marginal and joint ratios are nonparametric. The parametric methods such as LL and QL fail to yield correct type I error control or have limited ability to capture the difference under the alternative in at least one model setting, which demonstrates the limitation of such parametric approaches. The nonparametric methods NN and KLR both provide robust type I and II error control, with empirical type I error rate close to the nominal level.
	Moreover, the simulation results demonstrate that the data-driven tuning by minimizing out-of-sample cross entropy can have good practical performance.
	
	Note that in Model A, $\alpha_1 = \alpha_2$ under the null and $\alpha_1 \neq \alpha_2$ under the alternative hypothesis, while in Models B and C, the parametric assumption for the LRT is violated.
	As shown in \Cref{tab:sim}, the LRT method performs the best in Model A, which makes sense since it maximizes the usage of known information about the data and does not require sample splitting. In Model B, the LRT is not correctly specified because the true conditional distribution of $Y$ given $X$ is $t$-distribution and the conditional mean is not a linear function. Although it shows some ability to capture the distributional difference, it also has inflated type I error due to model misspecification. Further, the LL, KLR, and NN have comparable power with the LRT, demonstrating the advantage of the proposed approach by efficiently aggregating multiple conformal $p$-values, which reduces the impact of the sample splitting.  Moreover, in Model C, the LRT method fails to control the type I error, and is significantly less powerful than the QL, NN and KLR methods, demonstrating the benefit brought by the flexibility of our method in choosing the estimation algorithms in nonparametric models.
	
	\subsection{The high dimensional case}\label{sec:sim_hi}
	The flexibility of choosing probabilistic classification algorithms makes our method applicable in high-dimensional problems. Here we illustrate its performance in a high-dimensional scenario, which is similar to Model A in the low dimensional case but with ambient dimensionality $p=500$ and signal dimensionality $s=5$. The additional coordinates of $X$ are generated by iid standard normal, and the corresponding coefficients of $\boldsymbol\beta$ are filled with zeros. 
	Here we focus on a sparse linear classifier and investigate the effect of tuning and regularization. Letting $L$ be the class label with $L=1$ for training data and $L=0$ for testing data, we learn a sparse logistic regression model by minimizing
	$$ -\frac{1}{n_{12} + n_{22}}\sum_{k \in \{1, 2\} }\sum_{i \in \mathcal{I}_{k2}}\left[ L_{ki} \log \eta(x_{ki}; \boldsymbol \beta) + (1-L_{ki}) \log\{1-\eta(x_{ki};\boldsymbol \beta)\}\right] + \lambda \|\boldsymbol \beta\|_1 \,, $$
	where $\boldsymbol \beta = (\beta_0, \beta_1, \dots, \beta_p)$ and $\eta(x;\boldsymbol \beta) = P(L=1|x) = 1/[1+\exp\{-\beta_0 - \sum_{j=1}^p \beta_j x(j)\}]$. Then we obtain the marginal ratio estimator with $\hat{g}(x) = n_{12}\{1-\eta(x;\hat{\boldsymbol \beta})\}/\left\{n_{22}\eta(x;\hat{\boldsymbol \beta})\right\}$. In a similar manner, we can the estimate the joint density ratio and hence the conditional density ratio.
	
	We consider sample sizes $n_1= n_2 = 1000$ and 2000. The empirical rejection frequency and estimation errors of the conformal weights, $Err_{\hat{g}} = n_{11}^{-1}\sum_{i} |\hat G_{1i}/\sum_{i} \hat G_{1i} - G_{1i}/\sum_{i} G_{1i} |$, are shown in \Cref{fig:higha}.  Since the estimation errors are not observable in practice, we plot the out-of-sample marginal cross entropy error (MCEntropy, defined as $-L\log\hat p-(1-L)\log(1-\hat p)$) in the classification problem involved in estimating $\hat g$ (the solid lines with star-shaped marks).
	Moreover, under the alternative, we also report the empirical out-of-sample estimation error of the
	conditional density ratio $v$, defined by $Err_{\hat{v}} = (n_{11}+n_{21})^{-1}\left\{\sum_{i}(\hat{V}_{1i} - V_{1i})^2 + \sum_{j} (\hat{V}_{2j} - V_{2j})^2\right\}$.  
	
	Again, when the true marginal density ratios are used, the empirical sizes are close to the nominal level $\alpha=0.05$ as expected. 
	When the marginal density ratios are estimated, the type I error is well controlled for a wide range of tuning parameter values, indicating good robustness of validity. The plot of out-of-sample marginal cross entropy error suggests that in practice one can choose the tuning parameter value near the elbow of the error plot.
	Under the alternative hypotheses, the power is maximized at tuning parameter values corresponding to the smallest estimation error in $\hat v$, which can also be chosen using its out-of-sample cross entropy error plot (not shown in the plots). Practically one can also use separate tuning parameters for the marginal classification and joint classification. 
	
	\begin{figure}
		\centering
		\newcommand{\thiswidth}{0.46\linewidth}
		\newcommand{\thisgap}{1mm}
		\begin{tabular}{cc}
			\hspace{\thisgap}\includegraphics[width=\thiswidth]{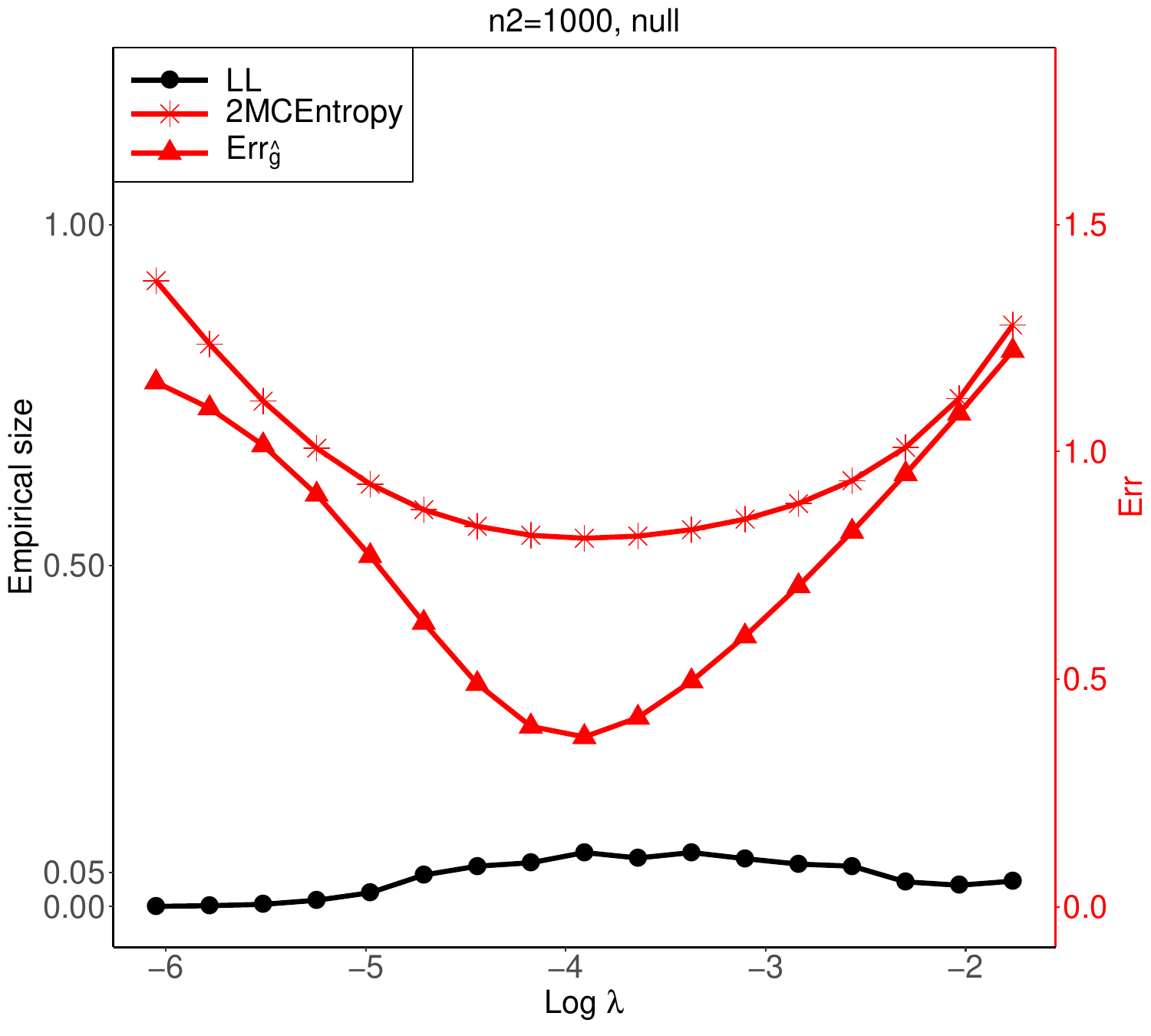} &
			\hspace{\thisgap}\includegraphics[width=\thiswidth]{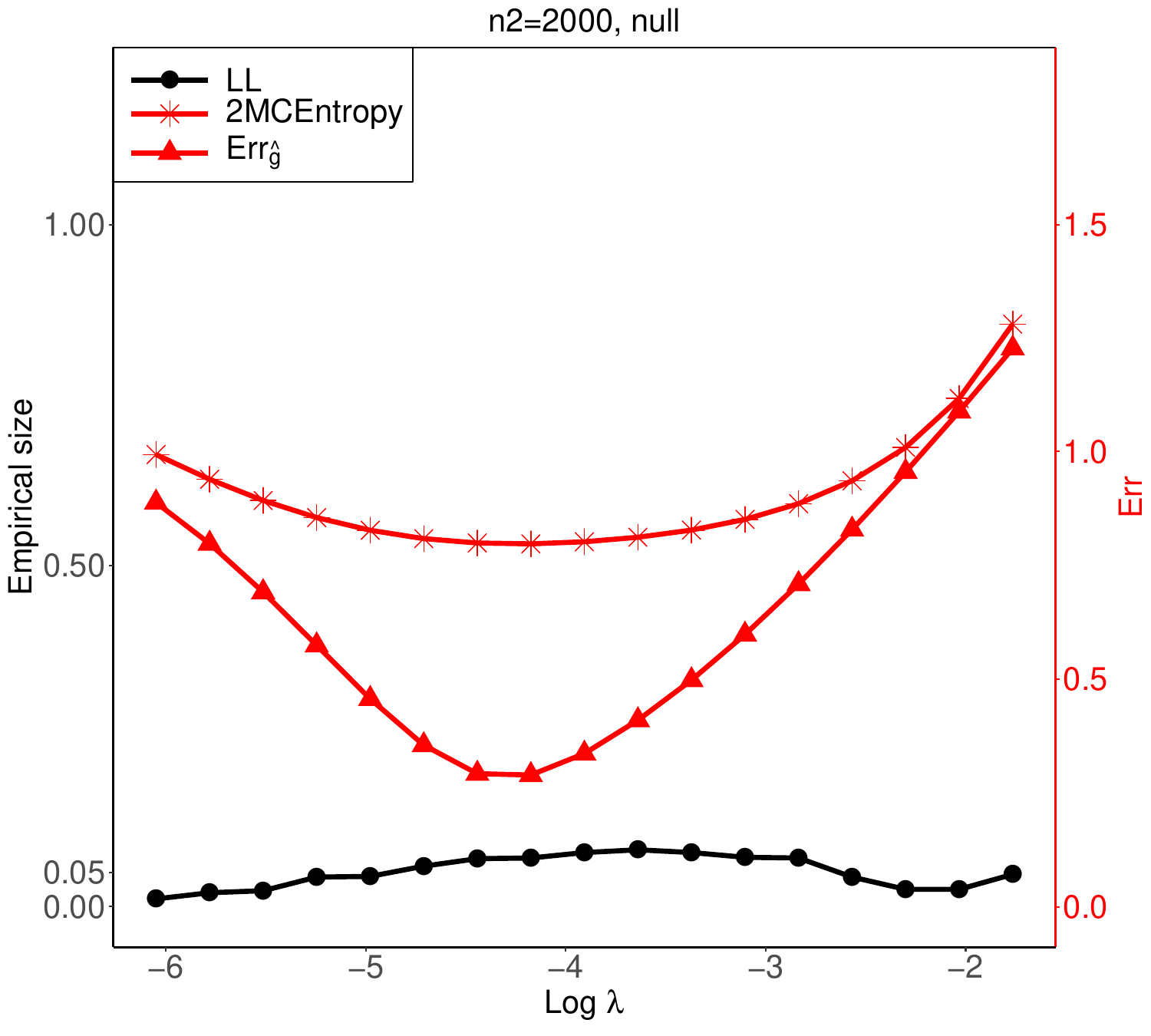}  \\ 
			\hspace{\thisgap}\includegraphics[width=\thiswidth]{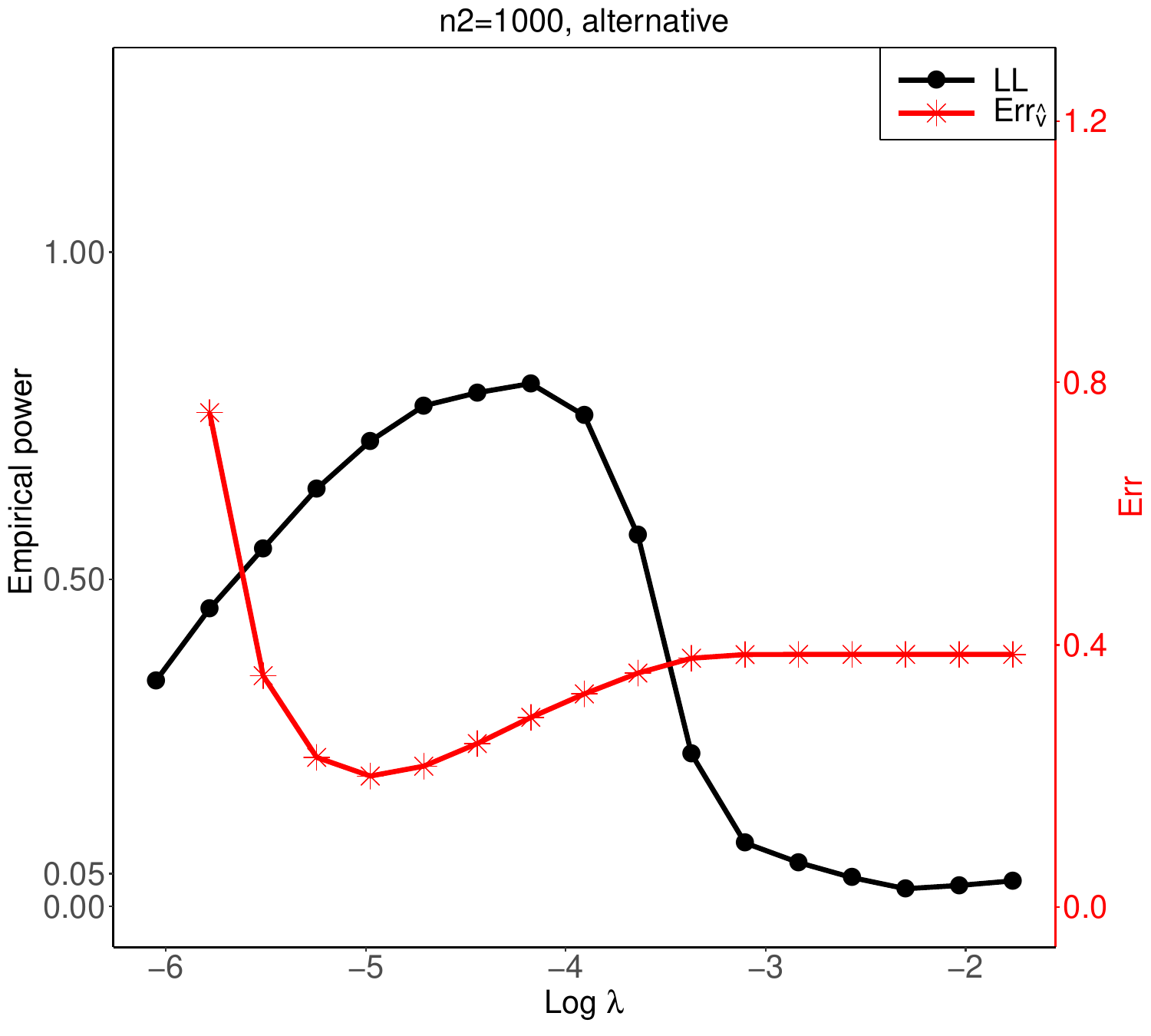} &
			\hspace{\thisgap}\includegraphics[width=\thiswidth]{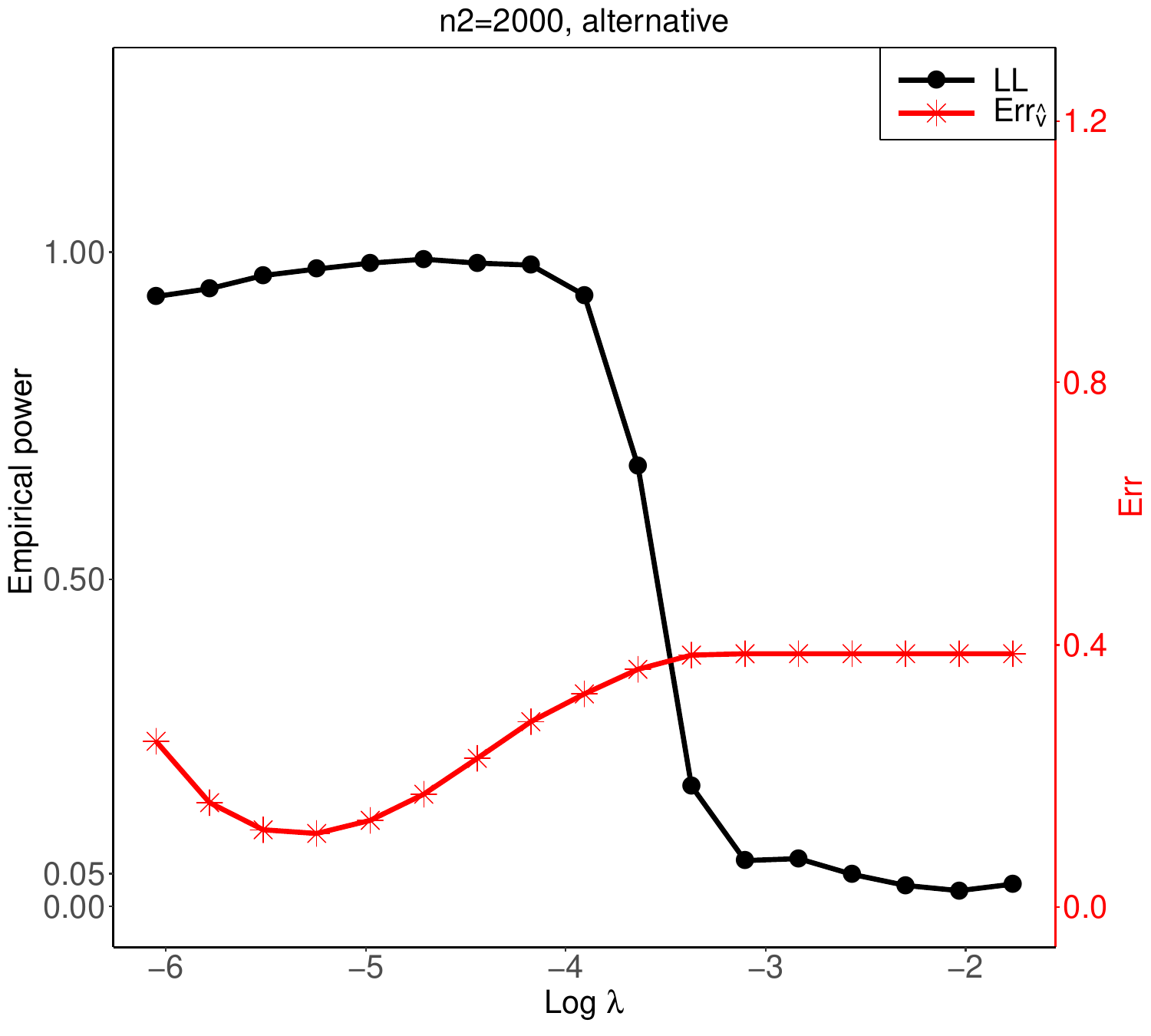} 		
		\end{tabular}
		\caption{Performance of empirical rejection frequency in Model A with $p=500, s=5$ under the null (top) and alternative (bottom) over 500 repetitions, across a variety of regularization parameters using sparse LL with $\alpha=0.05$ and split ratio 0.5.}
		\label{fig:higha}
	\end{figure}
	
	\section{A synthetic data example}\label{sec:real}
	
	We consider the airfoil data set from the UCI Machine Learning Repository \citep{Dua:2019}, which has $n = 1503$ observations of a response $Y$ (scaled sound pressure level of NASA airfoils), and a covariate $X$ with $p = 5$ dimensions (log frequency, angle of attack, chord length, free-stream velocity, and suction side log displacement thickness).  This data set has been used by \cite{tibshirani2019conformal}, who first studied weighted conformalization.
	
	The original data does not have a two-sample separation.
	We consider five experiments based on different ways to generate the two populations. 
	\begin{itemize}
		\item[(i)] Random partition. We randomly partition the data set with $n_1=751$ and $n_2=752$.
		\item[(ii)] Random partition and exponential tilting. We first randomly partition the data into two sets $\data_1$, $\tilde{\data}_2$. Then following \cite{tibshirani2019conformal}, we construct $\data_2$ by sampling $25\%$ of the points from $\tilde{\data}_2$ with replacement, with probabilities proportional to
		$$ w(x) = \exp(x^{\t}\alpha), ~~ \mathrm{where} ~~ \alpha = (-1, 0, 0, 0, 1) \,. $$
		The final sample sizes are $n_1=301$ and $n_2=301$.
		\item[(iii)] Chord-based partition. We split the data set into two subsets where the values of the ``chord'' variable in $\data_1$ are smaller than the 50\% quantile and exclude the ``chord'' variable in subsequent analyses, resulting in $n_1=778$ and $n_2=725$. To avoid singularity between the two populations, we randomly select $0.05 n_1$ samples in $\data_1$ and the same amount of samples in $\data_2$ to flip their groups.
		\item[(iv)] Velocity-based partition. We partition the data set into two subsets where the values of the ``velocity'' variable in $\data_1$ are smaller than the 50\% quantile and remove the covariate ``velocity'' from subsequent analyses, with $n_1=761$ and $n_2=742$. We randomly select $0.05 n_1$ samples in $\data_1$ and the same amount of samples in $\data_2$ to flip their groups.   
		\item[(v)] Response-based partition. We split the data according to the value of the response variable, where the first group contains the sample points with smaller response values, while $\data_2$ contains the rest, with $n_1=752$ and $n_2=751$.  A similar label flipping is applied to avoid singularity.
	\end{itemize}

\begin{table}
	\caption{\label{tab:airfoil}Percentage of rejections (PR) and average classification error of estimating $g$ in Airfoil data set for cases (i-ii) using methods LL and NN over 500 repetitions, with $\alpha=0.05$ and split ratio 0.5.}
	\centering
	\begin{tabular}{|c|c|c|c|c|}
		\hline
		& $\mathrm{PR_{LL}}$ & $\mathrm{PR_{NN}}$  & $\mathrm{MCE_{LL}}$ & $\mathrm{MCE_{NN}}$ \\ \hline
		case (i) & 0.048 & 0.056 &  0.500 & 0.500   \\ 
		\hline
		case (ii) & 0.052 & 0.068 & 0.208 & 0.208 \\ \hline
	\end{tabular}
\end{table}

\begin{table}
	\caption{\label{tab:airfoil2}Median p-values (Pval) and average classification error of estimating $g$ in Airfoil data set for cases (iii-v) using methods LL and NN over 500 random splits, with $\alpha=0.05$ and split ratio 0.5.} 
	\centering
	\begin{tabular}{|c|c|c|c|c|}
		\hline
		& $\mathrm{Pval_{LL}}$ & $\mathrm{Pval_{NN}}$  & $\mathrm{MCE_{LL}}$ & $\mathrm{MCE_{NN}}$   \\ \hline		
		case (iii) & 0.005 & 0.472 & 0.160 & 0.053  \\ \hline
		case (iv) & 0.000 & 0.412 & 0.444 & 0.066 \\ \hline
		case (v) & 0.000 & 0.002 & 0.233 & 0.136 \\ \hline
	\end{tabular}
\end{table}
	
	As in the simulation study, we split each group into two equal-sized subsets, and conduct the two-sample conditional distribution test at significance level $\alpha=0.05$. Linear logistic regression (LL) and neural network (NN) are used to estimate the density ratios. 
	The neural network uses two hidden layers with ten nodes in all cases.
	Each experiment is repeated for 500 trials.
	
	For experiments (i) and (ii), which are clearly under the null hypothesis, we can re-generate the data by repeating the random generation of the two subsamples. With these repeatedly generated data sets, we can compute the empirical frequency of rejections. 
	As shown in Table \ref{tab:airfoil}, the type I errors are close to the nominal level. 
	
	For experiments (iii-v), we can only have a single deterministic generation of the training and testing data, except a small fraction of group flipping, and only a single $p$-value can be computed.  We use these experiments to illustrate the effect of multiple realizations of auxiliary randomization.  Recall that our method uses auxiliary randomization to split the data sets into fitting and ranking subsamples.  Such auxiliary randomization may lead to different results on the same data set if the inference is carried out independently by different researchers.  To mitigate this effect, one can obtain multiple $p$-values using multiple realizations of auxiliary randomization.  Although each single $p$-value has asymptotically valid null distribution, their dependence requires a careful aggregation of these $p$-values.  Here we use a median $p$-value approach in \cite{DiCiccioDR20}. Formally, suppose we repeat the auxiliary randomization $B$ times, obtaining $p$-values $\hat p_1,...,\hat p_B$.
	Then 
	$\hat p=1\wedge \left[2\times {\rm Median}(\hat p_1,...,\hat p_B)\right]$
	is a valid $p$-value.
	
	In all experiments (i-v), we use out-of-sample marginal classification error (MCE) as a proxy of the accuracy of marginal density ratio estimation. 
	In experiment (iii), both LL and NN methods give large $p$-values, and the NN method also has small marginal classification errors.  Thus there is no strong evidence against the covariate shift assumption.
	In experiment (iv), the LL method gives small $p$-values while the NN method suggests otherwise.  But the marginal classification errors indicate that the NN method is likely to be more accurate in estimating the marginal density ratios, and hence provides more trustworthy $p$-values.
	In experiment (v), both methods agree to reject the null hypothesis, which is the correct decision by the construction of training and testing samples. The neural net method also gives marginal classification errors comparable to those in the null cases, further confirming the validity of $p$-value.

	\section{Discussion}\label{sec:discussion}
	In applications it is often the case that the training data $(X_{1i},Y_{1i})_{i=1}^{n_1}$ has a large sample size, whereas the testing data $(X_{2j},Y_{2j})_{j=1}^{n_2}$ has a limited sample size. As our theory and experiments have demonstrated, a valid type I error control of the proposed test only depends on the accuracy of marginal density ratio estimation.  Our method would be particularly useful in the semi-supervised scenario, where unlabeled testing sample points $X_{2j}$ are easy to obtain.  In this case we can use these unlabeled testing sample points to estimate the marginal density ratio, and save the scarce labeled testing sample to estimate the joint density ratio.
	
	
	The use of sample splitting and auxiliary randomization for valid and efficient statistical inference has been studied by many authors in the high dimensional regression literature \citep{wasserman2009high,meinshausen2009p,rinaldo2019bootstrapping}, and more recently in the conformal inference literature \citep{kuchibhotla2019nested,kim2020predictive}. The theory in \cite{kim2020predictive} also required an inflated non-coverage by a factor of two after aggregating multiple subsamples.  It is unclear whether such a loss of coverage is unavoidable. It would be interesting and important to better understand the dependence between the $p$-values from multiple splits, and improve the current conservative inflation method when combining multiple $p$-values.

	Conformal methods are initially developed without sample splitting, but in a leave-one-out manner.  The validity of such conformal $p$-values comes from the symmetry among the data points.  A straightforward leave-one-out version of our method would be to leave out each pair of sample points $(X_{1i},Y_{1i})$, $(X_{2j},Y_{2j})$ for $i\in[n_1]$ and $j\in [n_2]$, and fit $\hat g$, $\hat v$ using all remaining data.  Such an implementation will require $n_1 n_2$ re-fitting of $\hat g$, $\hat v$, which is computationally prohibitive. Some efficient implementation of leave-one-out update or warm start techniques would be necessary. Also, the complex dependencies among the resulting conformity scores bring further theoretical challenges in understanding the aggregated conformal $p$-values.  Given the potential improved sample efficiency, these questions may be investigated in future works.
	\bibliographystyle{asa}
	\bibliography{test}
	
	\newpage
	
	\appendix
	
	\section{Asymptotic variance using importance sampling}\label{sec:asy_var_aux}
	We illustrate the importance sampling idea for estimating the asymptotic variance using the samples from $P_2$.  For simplicity we focus on the ideal statistic $T$ which uses the true density ratio functions.  The same idea can be directly carried over to the actual statistic that uses estimated density ratios.
	
	Recall that the asymptotic variance of $\sqrt{n_{11}}n_{21}^{-1}\sum_j U_j$ is given by $\sigma^2 = \sigma_1^2 + n_{11}/(12n_{21}) + \sigma_2^2/4 - \rho_{12} $ where $\sigma_1^2 = \mathrm{Var}\left[ G_{11}\{1-F_{1/2}(V_{11})\} \right]$, $\sigma_2^2 = \mathrm{Var}(G_{11})$ and $\rho_{12} = \mathrm{Cov}[G_{11}\{1-F_{1/2}(V_{11})\}, G_{11} ]$.
	
	Note that under $H_0$,
	\begin{align*}
	\sigma_1^2 = & \int \frac{f_2^2(z)}{f_1^2(z)}\{1-F_{1/2}(v(z))\}^2 f_1(z)dz - \frac{1}{4}\\
	=& \int \frac{f_2(z)}{f_1(z)}\{1-F_{1/2}(v(z))\}^2 f_2(z)dz-\frac{1}{4}\\
	=& \mathbb E G_{21}\{1- F_{1/2}(V_{21})\}^2-\frac{1}{4}\,.
	\end{align*}
	
	Analogously, we have
	$$ \sigma_2^2 = \mathbb E G_{21} -1 \,,$$
	$$ \rho_{12} = \mathbb E G_{21}\{1-F_{1/2}(V_{21})\} - 1/2 \,.  $$
	
	Essentially, the change of base measure allows us to represent the expectations under $P_1$ to corresponding expectations under $P_2$. So we can use the sample $(X_{2j},Y_{2j})_{j=1}^{n_{21}}$ to estimate the asymptotic variance.  The consistency of the importance sample estimate follows the same strategy as the proof for the original estimate and is omitted.
	
	\section{More simulation results}
	
	\subsection{Additional simulation results under different splitting ratios}\label{sec:other_split_ratio}
	
	To investigate the effect of the splitting ratio on the performance of the test, we consider $r=0.3, 0.5$ and 0.8, respectively, where $n_{11} =\lceil n_1*r \rceil$ and $n_{21} = \lceil n_2 * r \rceil$. We provide the results for Models A, B and C in Tables \ref{tab:sima}-\ref{tab:simc}. Additionally, the results of the proposed approach with the true functions $g, v$ are also included in these tables, denoted by ``Oracle". We should note that the ``Oracle" requires no estimation, thus it achieves higher power as $r$ increases while still controlling the type I error.
	By contrast, when $r$ increases, the sample size of the fitting data decreases, and the estimation algorithms may produce estimates with limited accuracy for the testing procedure. For example, as shown in Table \ref{tab:sima}, the NN fails to control the type I error in model A under $n_2=200$ and $r=0.8$. Moreover, a larger splitting ratio $r$ may lead to lower power due to the inaccurate estimates. From Tables \ref{tab:simb} and \ref{tab:simc}, when $n_2=2000$, the NN achieves lower power under $r=0.8$ than $r=0.5$.
	In contrast, a smaller $r$ will lead to more accurate estimates, but the test may suffer from power loss because less data are used for constructing the test statistic.
	More specifically, when $r=0.3$, though the type I error is well controlled for NN and KLR in all considered models, the power is lower than $r=0.5$ especially for the small sample size.
	Overall, we suggest using $r=0.5$, and a smaller $r$ is allowed if one is not very confident in the resulting estimators in practice.
	
	\begin{table}[!ht]
		\centering
		\caption{\label{tab:sima}Percentage of rejections using methods LL, QL, NN and KLR with $\alpha=0.05$ under different sample splitting ratios for Model A.}
		\begin{tabular}{|c|c|ccccc|ccccc|}
			\hline
			\multicolumn{2}{|c|}{\multirow{2}{*}{}} & \multicolumn{5}{c|}{Null} & \multicolumn{5}{c|}{Alternative} \\ \cline{3-12}
			\multicolumn{2}{|c|}{} & LL & QL & NN & KLR & Oracle & LL & QL & NN & KLR & Oracle \\ 
			\hline
			\multirow{3}{*}{$n_2=200$} & r=0.3 & 0.024 & 0.024 & 0.034 & 0.046 & 0.042 & 0.282 & 0.092 & 0.338 & 0.336 & 0.300 \\ 
			& r=0.5 & 0.038 & 0.022 & 0.066 & 0.038 & 0.052 & 0.354 & 0.082 & 0.416 & 0.406 & 0.444 \\ 
			& r=0.8 & 0.034 & 0.010 & 0.136 & 0.042 & 0.042 & 0.350 & 0.010 & 0.398 & 0.408 & 0.578 \\ \hline
			\multirow{3}{*}{$n_2=500$} & r=0.3 & 0.050 & 0.026 & 0.034 & 0.052 & 0.038 & 0.550 & 0.382 & 0.496 & 0.534 & 0.580 \\ 
			& r=0.5 & 0.044 & 0.020 & 0.058 & 0.066 & 0.038 & 0.654 & 0.392 & 0.644 & 0.698 & 0.720 \\ 
			& r=0.8 & 0.032 & 0.024 & 0.086 & 0.042 & 0.066 & 0.770 & 0.216 & 0.822 & 0.752 & 0.864 \\ \hline
			\multirow{3}{*}{$n_2=1000$} & r=0.3 & 0.048 & 0.044 & 0.068 & 0.044 & 0.024 & 0.744 & 0.672 & 0.730 & 0.770 & 0.754 \\ 
			& r=0.5 & 0.042 & 0.038 & 0.058 & 0.056 & 0.050 & 0.898 & 0.770 & 0.866 & 0.912 & 0.902 \\ 
			& r=0.8 & 0.048 & 0.030 & 0.088 & 0.070 & 0.054 & 0.962 & 0.700 & 0.932 & 0.974 & 0.980 \\ \hline
			\multirow{3}{*}{$n_2=2000$} & r=0.3 & 0.042 & 0.052 & 0.050 & 0.066 & 0.040 & 0.938 & 0.902 & 0.920 & 0.932 & 0.934 \\ 
			& r=0.5 & 0.048 & 0.056 & 0.068 & 0.068 & 0.070 & 0.980 & 0.974 & 0.990 & 0.990 & 0.986 \\ 
			& r=0.8 & 0.064 & 0.044 & 0.088 & 0.066 & 0.064 & 1.000 & 0.968 & 0.994 & 0.998 & 1.000  \\ \hline
		\end{tabular}
	\end{table}
	
	\begin{table}[!ht]
		\centering
		\caption{\label{tab:simb}Percentage of rejections using methods LL, QL, NN and KLR with $\alpha=0.05$ under different sample splitting ratios for Model B.}
		\begin{tabular}{|c|c|ccccc|ccccc|}
			\hline
			\multicolumn{2}{|c|}{\multirow{2}{*}{}} & \multicolumn{5}{c|}{Null} & \multicolumn{5}{c|}{Alternative} \\ \cline{3-12}
			\multicolumn{2}{|c|}{} & LL & QL & NN & KLR & Oracle & LL & QL & NN & KLR & Oracle \\ 
			\hline
			\multirow{3}{*}{$n_2=200$} & r=0.3 & 0.064 & 0.048 & 0.056 & 0.044 & 0.060 & 0.176 & 0.128 & 0.192 & 0.178 & 0.680 \\ 
			& r=0.5 & 0.044 & 0.058 & 0.052 & 0.062 & 0.066 & 0.200 & 0.116 & 0.180 & 0.224 & 0.858 \\ 
			& r=0.8 & 0.066 & 0.032 & 0.032 & 0.112 & 0.052 & 0.194 & 0.068 & 0.086 & 0.302 & 0.970  \\ \hline
			\multirow{3}{*}{$n_2=500$} & r=0.3 & 0.070 & 0.056 & 0.050 & 0.042 & 0.058 & 0.338 & 0.246 & 0.356 & 0.352 & 0.942  \\ 
			& r=0.5 & 0.058 & 0.046 & 0.076 & 0.062 & 0.044 & 0.456 & 0.268 & 0.470 & 0.506 & 0.994 \\ 
			& r=0.8 & 0.058 & 0.070 & 0.038 & 0.152 & 0.064 & 0.518 & 0.206 & 0.458 & 0.560 & 0.998  \\ \hline
			\multirow{3}{*}{$n_2=1000$} & r=0.3 & 0.052 & 0.054 & 0.030 & 0.030 & 0.056 & 0.584 & 0.514 & 0.714 & 0.728 & 1.000 \\ 
			& r=0.5 & 0.056 & 0.074 & 0.024 & 0.024 & 0.054 & 0.726 & 0.586 & 0.722 & 0.802 & 1.000 \\ 
			& r=0.8 & 0.062 & 0.066 & 0.180 & 0.024 & 0.052 & 0.774 & 0.390 & 0.774 & 0.608 & 1.000 \\ \hline
			\multirow{3}{*}{$n_2=2000$} & r=0.3 & 0.064 & 0.050 & 0.054 & 0.054 & 0.062 & 0.822 & 0.858 & 0.998 & 0.986 & 1.000 \\ 
			& r=0.5 & 0.066 & 0.062 & 0.038 & 0.026 & 0.060 & 0.946 & 0.936 & 1.000 & 0.994 & 1.000  \\ 
			& r=0.8 & 0.056 & 0.078 & 0.054 & 0.022 & 0.064 & 0.952 & 0.810 & 0.866 & 0.972 & 1.000 \\ \hline
		\end{tabular}
	\end{table}

	\begin{table}[!ht]
		\centering
		\caption{\label{tab:simc}Percentage of rejections using methods LL, QL, NN and KLR with $\alpha=0.05$ under different sample splitting ratios for Model C.}
		\begin{tabular}{|c|c|ccccc|ccccc|}
			\hline
			\multicolumn{2}{|c|}{\multirow{2}{*}{}} & \multicolumn{5}{c|}{Null} & \multicolumn{5}{c|}{Alternative} \\ \cline{3-12}
			\multicolumn{2}{|c|}{} & LL & QL & NN & KLR & Oracle & LL & QL & NN & KLR & Oracle \\ 
			\hline
			\multirow{3}{*}{$n_2=200$} & r=0.3 & 0.074 & 0.064 & 0.048 & 0.062 & 0.050 & 0.062 & 0.190 & 0.072 & 0.202 &  0.530 \\ 
			& r=0.5 & 0.080 & 0.066 & 0.056 & 0.064 & 0.046 & 0.074 & 0.204 & 0.062 & 0.300 & 0.760 \\ 
			& r=0.8 & 0.062 & 0.022 & 0.040 & 0.098 & 0.060 & 0.062 & 0.118 & 0.040 & 0.230 & 0.908 \\ \hline
			\multirow{3}{*}{$n_2=500$} & r=0.3 & 0.080 & 0.042 & 0.038 & 0.054 & 0.038 & 0.062 & 0.580 & 0.468 & 0.684 & 0.900  \\ 
			& r=0.5 & 0.070 & 0.036 & 0.066 & 0.040 & 0.036 & 0.080 & 0.624 & 0.504 & 0.796 & 0.984 \\ 
			& r=0.8 & 0.074 & 0.034 & 0.056 & 0.074 & 0.044 & 0.060 & 0.366 & 0.132 & 0.702 & 0.994  \\ \hline
			\multirow{3}{*}{$n_2=1000$} & r=0.3 & 0.092 & 0.044 & 0.032 & 0.056 & 0.034 & 0.070 & 0.906 & 0.866 & 0.950 & 0.994  \\ 
			& r=0.5 & 0.098 & 0.046 & 0.030 & 0.064 & 0.058 & 0.088 & 0.964 & 0.900 & 0.988 &  1.000 \\ 
			& r=0.8 & 0.116 & 0.052 & 0.102 & 0.066 & 0.052 & 0.090 & 0.800 & 0.626 & 0.972 & 1.000 \\ \hline
			\multirow{3}{*}{$n_2=2000$} & r=0.3 & 0.100 & 0.046 & 0.068 & 0.046 & 0.062 & 0.090 & 0.994 & 1.000 & 0.998 & 1.000 \\ 
			& r=0.5 & 0.110 & 0.050 & 0.080 & 0.064 & 0.044 & 0.094 & 0.998 & 1.000 & 1.000 & 1.000 \\ 
			& r=0.8 & 0.138 & 0.048 & 0.044 & 0.050 & 0.030 & 0.092 & 0.990 & 0.974 & 1.000 & 1.000  \\ \hline
		\end{tabular}
	\end{table}
	
	\subsection{Error quantities}\label{sec:error_quantities}
	
	Let $\rho_1 =|\mathrm{Corr}_*(\hat G_{11}-G_{11}, \hat D_{11})|$, $\rho_2 = |\mathrm{Corr}_*(G_{11}, \hat D_{11})| |\mathbb E_* (\hat G_{11}-G_{11})|/\|\hat G_{11}-G_{11}\|_{1,*} $ and $\rho_3=|\mathbb E_* G_{11}(\hat D_{11}-D_{11})|$.
	As discussed in Section \ref{sec:thm}, to control the type I error asymptotically, the $\rho_1$ and $\rho_2$ should be sufficiently small. To guarantee the consistency of the test, we require the $\rho_3$ to be small enough so it does not destroy the signal in the data.
	Since it is hard to verify if these quantities converges theoretically because of the complex interactions of the variables in the simulation studies, we provide the empirical estimation of the quantities for the LL and KLR, see Figure \ref{fig:error}. Note that $\rho_{1,LL}$ denotes the empirical estimate of $\rho_1$ with the estimators obtained by the LL. Other quantities are defined similarly.
	
	 In Model A, the LL leads to smaller estimation error compared to the KLR method, which makes sense since the LL is the correctly specified parametric method. The error quantities for both LL and KLR decreases to nearly zero as the sample size increases, which provides empirical support for the assumptions. In Model B, though the LL and KLR are not correct models, they perform decently. Under the alternative, the $\hat{\rho}_{3,LL}$ converges slower than $\hat{\rho}_{3,KLR}$, which agrees with the power performance that the power of the LL is a bit lower than the KLR for the large sample size.
	 In Model C, the $\hat{\rho}_{1,LL}$ is larger than $\hat{\rho}_{1,KLR}$ for large sample sizes, which explains why the empirical type I error of the LL is large. Under the alternative of Model C, $\hat{\rho}_{3,LL}$ is not convergent, and the error seems too large that it destroys the signal. It explains the phenomenon that the LL barely shows any power in this case.
	
	\begin{figure}
		\centering
		\newcommand{\thiswidth}{0.46\linewidth}
		\newcommand{\thisgap}{1mm}
		\begin{tabular}{cc}
			\hspace{\thisgap}\includegraphics[width=\thiswidth]{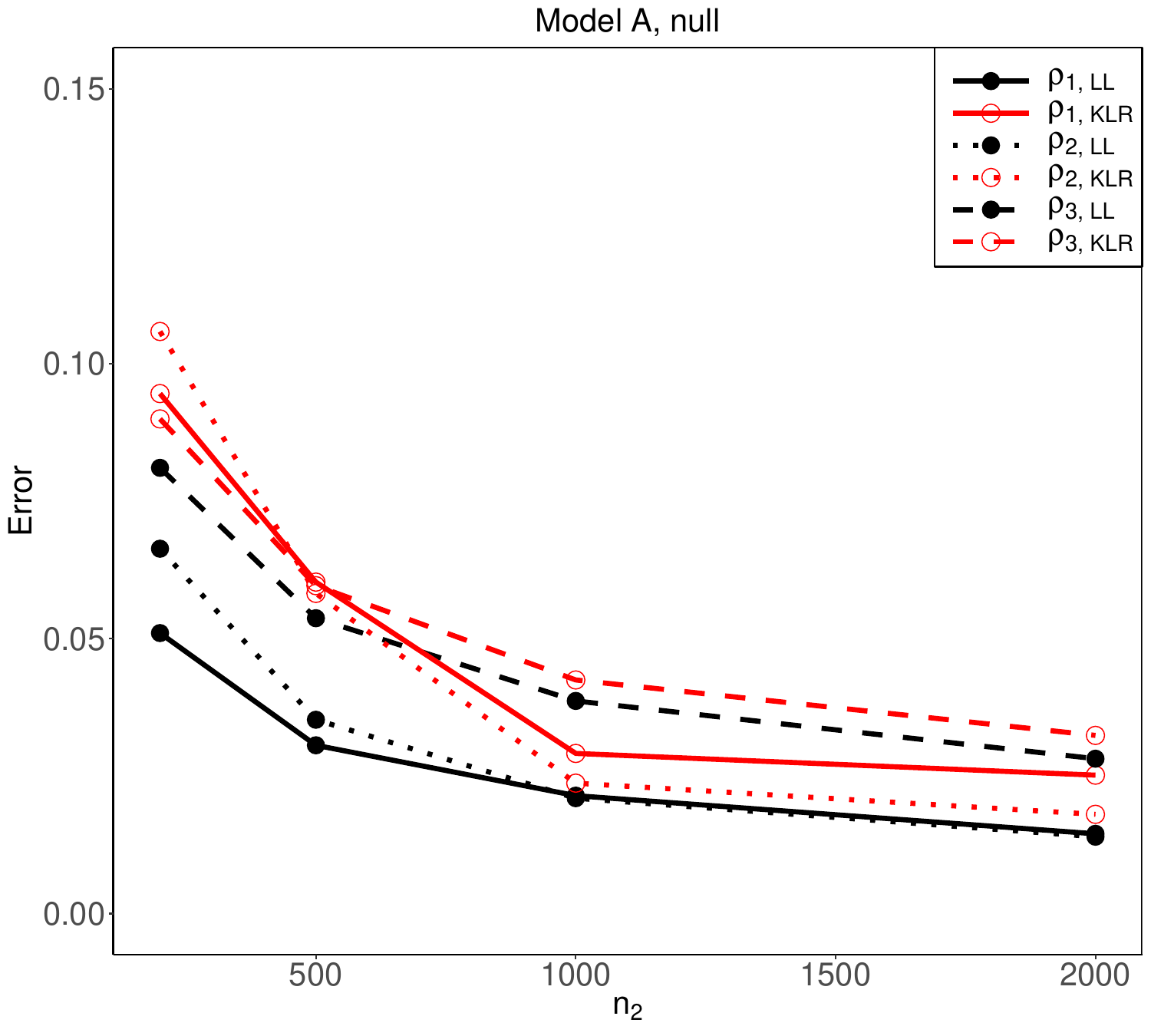} &
			\hspace{\thisgap}\includegraphics[width=\thiswidth]{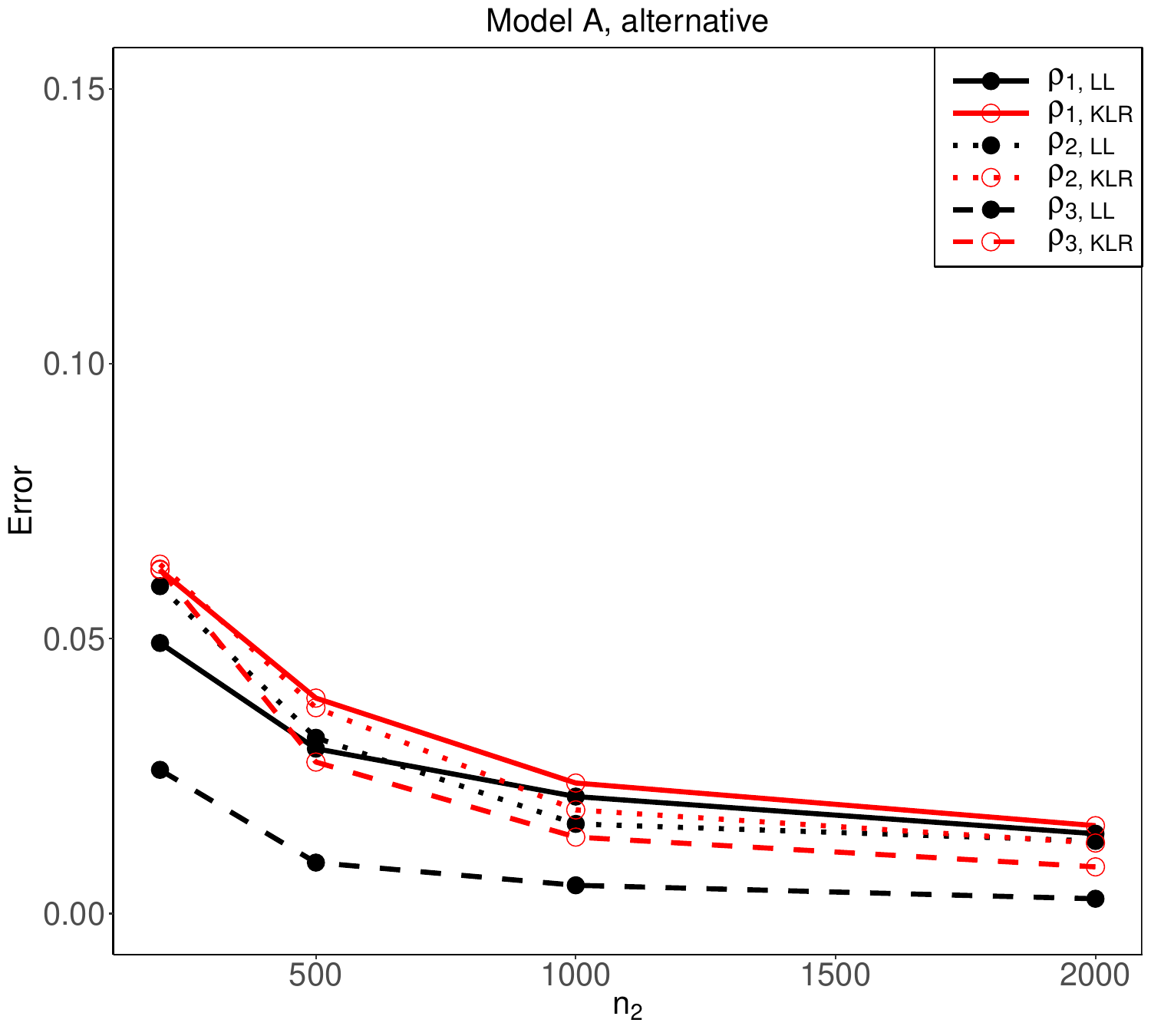}  \\ 
			\hspace{\thisgap}\includegraphics[width=\thiswidth]{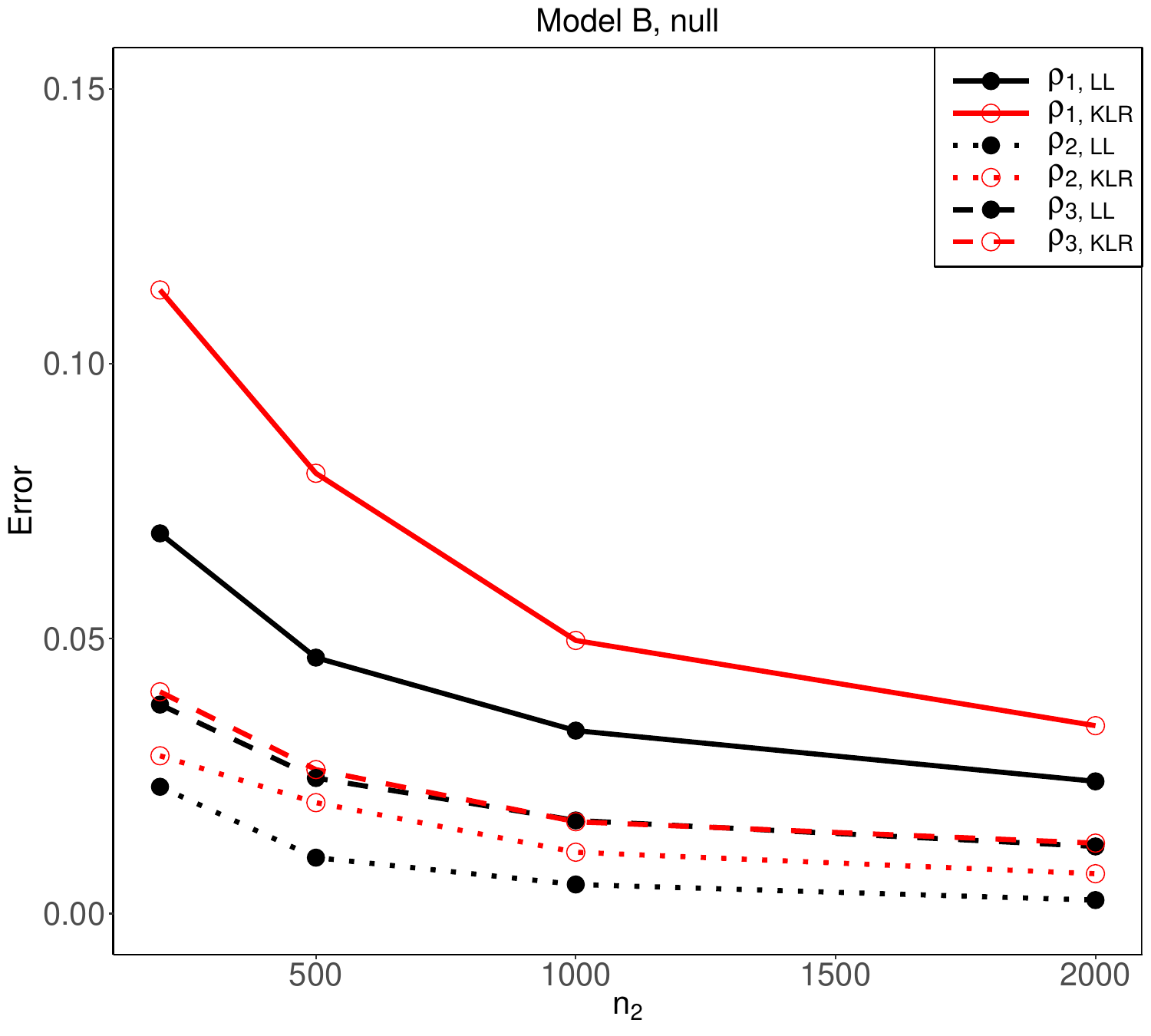} &
			\hspace{\thisgap}\includegraphics[width=\thiswidth]{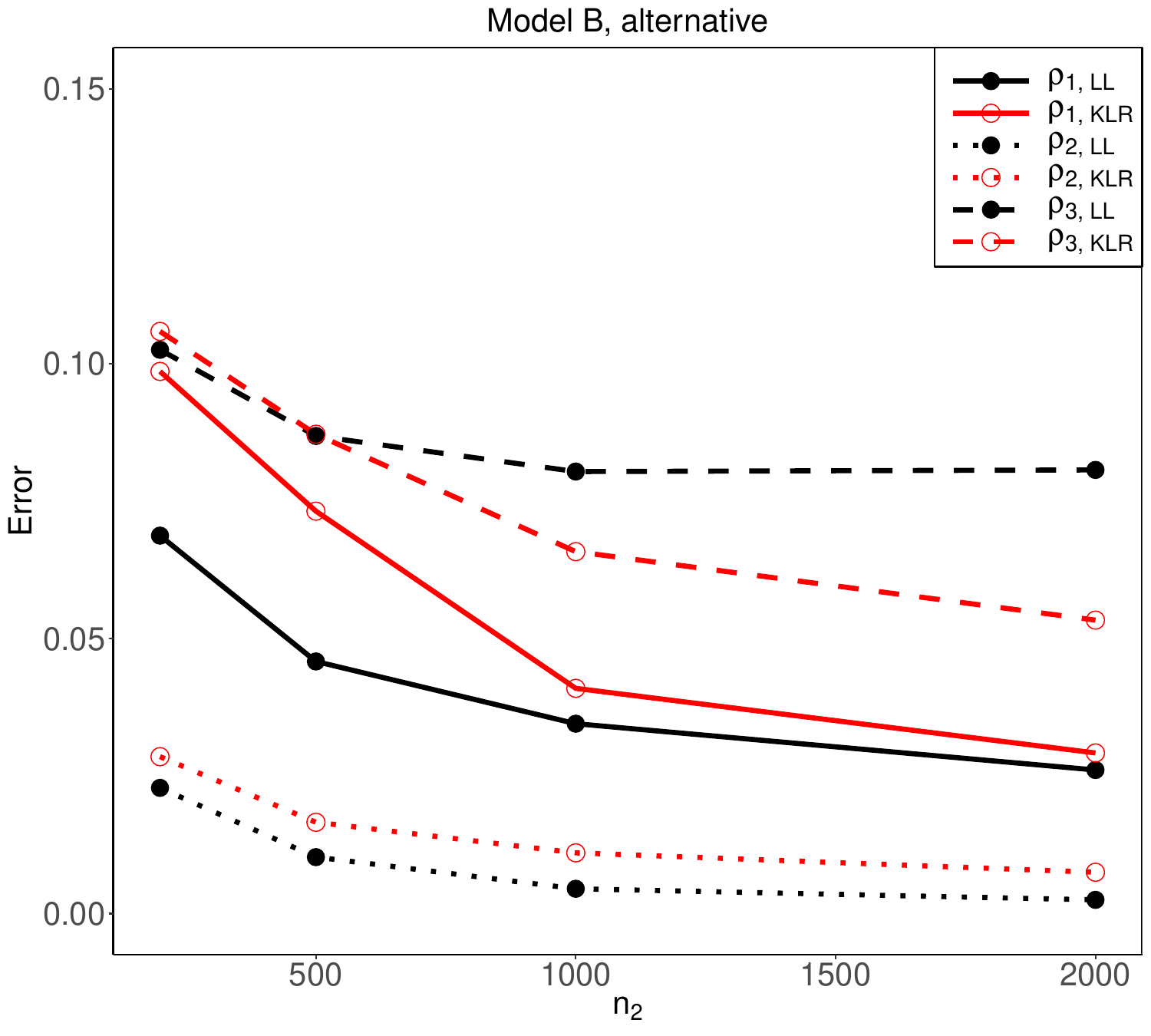} \\ 	
			\hspace{\thisgap}\includegraphics[width=\thiswidth]{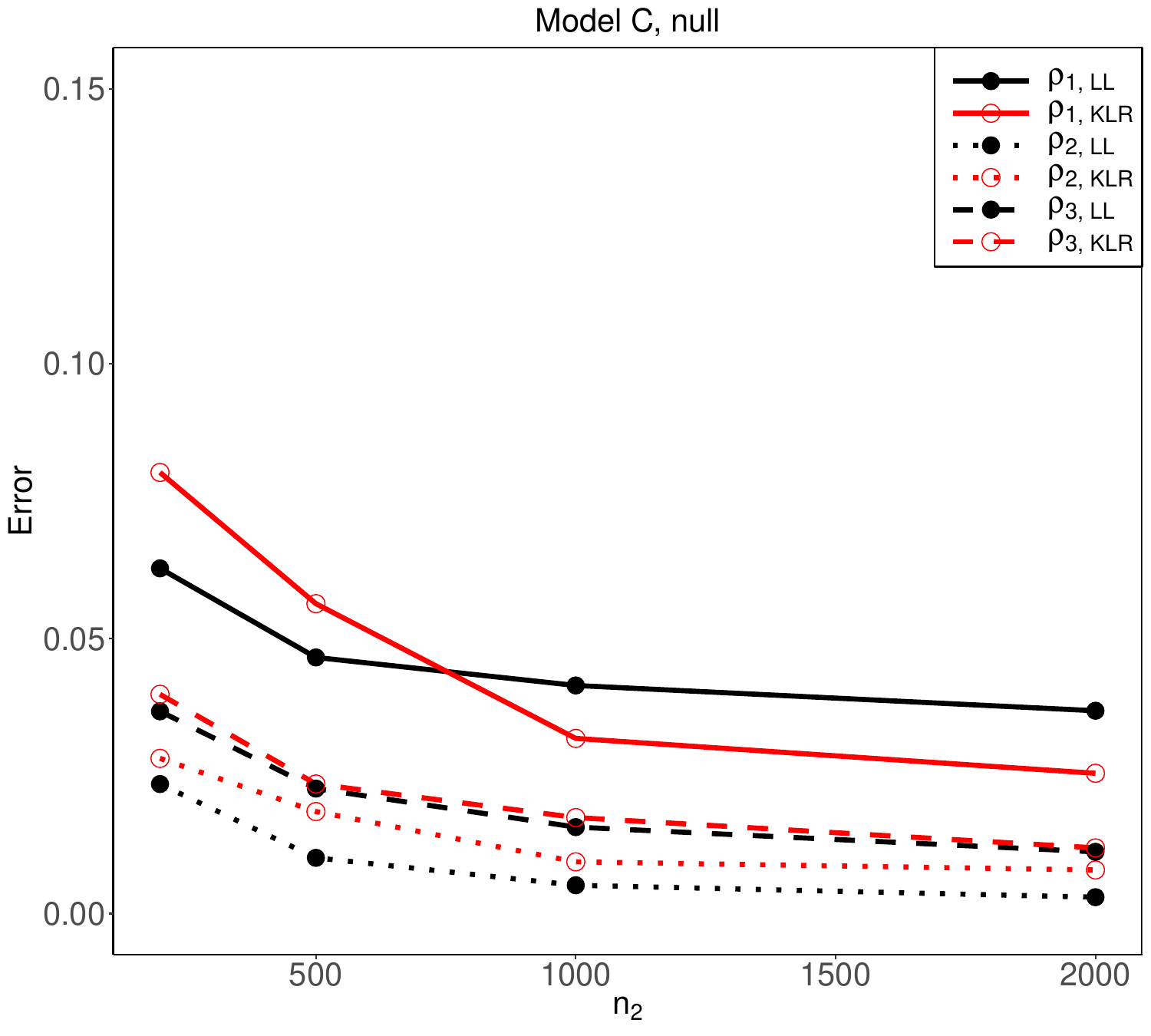} &
			\hspace{\thisgap}\includegraphics[width=\thiswidth]{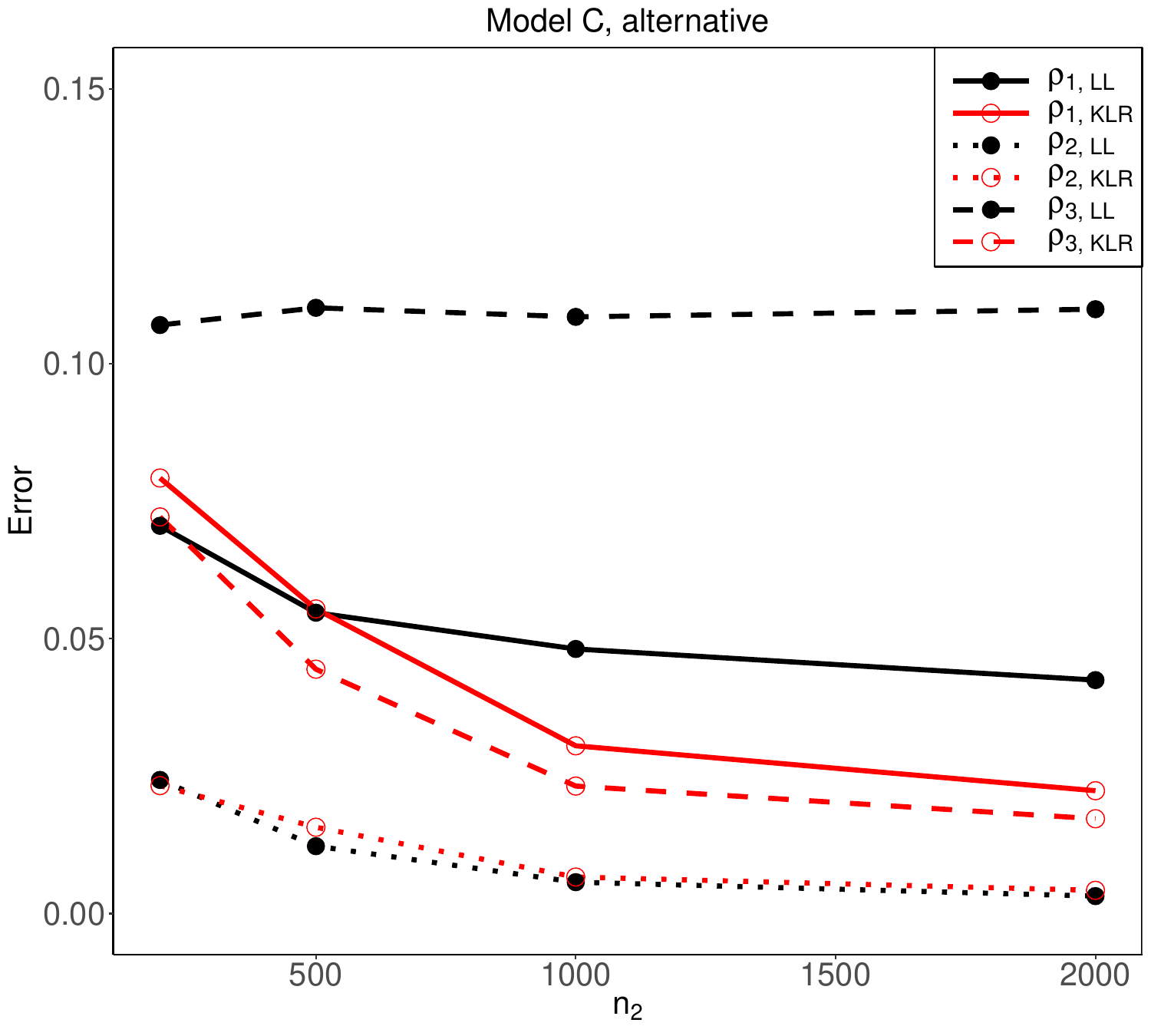}	
		\end{tabular}
		\caption{Error quantities under LL and KLR for all models with $p=500, s=5$ over 500 repetitions with $\alpha=0.05$ and the split ratio $r=0.5$.}
		\label{fig:error}
	\end{figure}

	\section{Auxiliary lemmas}\label{sec:proof_aux}
	
	In the rest of the Appendix we provide proofs of main results and related auxiliary lemmas.  The notation will mostly follow the main text, although some proofs involve their own notation.  Throughout the proofs, we use $C$ to denote a constant whose value only depends on $(P_1,P_2)$ but not the sample sizes $(n_1,n_2)$.  The value of $C$ may also change from line to line.
	
	Our first auxiliary lemma provides an analogous CDF transformation for discrete random variable.
	\begin{lemma}\label{lem:disc_unif}
		Let $V$ be a discrete random variable with finite support.  Let  $F(\cdot)$ be the CDF of $V$, with $F_-(\cdot)$ being its left limit. Let $p(\cdot)$ be the corresponding probability mass function.  Let $\zeta$ be an independent $U(0,1)$ random variable. Then
		$U=F_-(V)+\zeta p(V)\sim U(0,1)$.
	\end{lemma}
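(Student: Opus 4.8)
The plan is to verify that $U$ is uniform by directly computing its cumulative distribution function, exploiting the fact that conditioning on the value of $V$ turns $U$ into a uniform variable on a subinterval of $[0,1]$, and that as the support point varies these subintervals tile $[0,1]$ exactly. This is the discrete analogue of the probability integral transform, with the randomizer $\zeta$ playing the role of smoothing out the jumps of $F$.

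First I would fix notation by enumerating the finite support of $V$ as $v_1<v_2<\cdots<v_N$ with masses $p_j=p(v_j)$, so that $F(v_j)=\sum_{i\le j}p_i$ and $F_-(v_j)=\sum_{i<j}p_i=F(v_{j-1})$, under the convention $F(v_0)=0$. The structural observation I would record is that $[F_-(v_j),F(v_j)]$ is an interval of length exactly $p_j$, and that these intervals partition $[0,1]$ as $j$ ranges over the support. Conditional on $V=v_j$ we have $U=F_-(v_j)+\zeta\,p_j$, and since $\zeta\sim U(0,1)$ is independent of $V$, this conditional law is uniform on $[F_-(v_j),F(v_j)]$.

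Next, for a fixed $u\in[0,1]$ I would condition on $V$ and write
$$
P(U\le u)=\sum_{j=1}^{N}p_j\,P\!\left(\zeta\le \frac{u-F_-(v_j)}{p_j}\right).
$$
Locating $u$ in the partition, say $u\in[F_-(v_{j^\ast}),F(v_{j^\ast}))$, the summands split into three regimes: for $j<j^\ast$ the argument exceeds $1$ so the probability is $1$; for $j>j^\ast$ the argument is negative so the probability is $0$; and for $j=j^\ast$ the probability equals $(u-F_-(v_{j^\ast}))/p_{j^\ast}$. Summing the contributions gives $F_-(v_{j^\ast})+\bigl(u-F_-(v_{j^\ast})\bigr)=u$, hence $P(U\le u)=u$ for every $u\in[0,1]$, which is exactly the claim.

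I do not anticipate a genuine obstacle here, as the only point requiring mild care is the bookkeeping of endpoints when $u$ coincides with a jump of $F$; this is harmless because $\zeta$ is continuous and the tiling intervals overlap only on a set of measure zero, so the value of the conditional probability at such a boundary does not affect the computation. The finiteness of the support ensures the sum has finitely many terms and that $u$ lies in a unique tiling interval, so no limiting argument is needed.
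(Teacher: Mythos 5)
Your proof is correct and follows essentially the same route as the paper's: both arguments rest on the observation that, conditional on $V=v_j$, the variable $U$ is uniform on the length-$p_j$ interval $[F_-(v_j),F(v_j)]$ and that these intervals tile $[0,1]$. The only cosmetic difference is that you verify $P(U\le u)=u$ by a direct CDF computation, whereas the paper checks that the density of $U$ is constant and equal to $1$ on each tiling interval; the content is the same.
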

	\begin{proof}[Proof of \Cref{lem:disc_unif}]
		Let $v_1<...<v_s$ be the support of $V$.  Let $q_0=0$, $q_j=p(v_1)+...+p(v_j)$ for $j=1,...,s$.  Then $q_s=1$.
		For $1\le j\le s$, it is direct to verify that the density of $U$ on $(q_{j-1},q_j)$ is a constant.  Moreover, the length of this interval is $q_j-q_{j-1}=p(v_j)$, which is the same as the probability of $U$ falling in $(q_{j-1},q_j)$. So the density of $U$ on this interval is $1$. Since this holds for each $j$, we conclude that $U\sim U(0,1)$.
	\end{proof}
	
	The next lemma is useful in establishing separation between $H_0$ and $H_1$ when the correct $V$ function is used.  \Cref{lem:cov} provides a ``change-of-variable'' trick to simplify an integral involved in calculating $\mathbb E U$, the expected value of the conformal $p$-value given in \eqref{eq:U_weighted}. 
	\begin{lemma}\label{lem:cov}
		Let $(X_j,Y_j)\sim P_j$ be independent for $j=1,2$, and $(X_2',Y_2')$ be another independent draw from $P_2$.  Let $v(x,y)=\frac{f_1(y|x)}{f_2(y|x)}$ and $g(x)=f_{2,X}(x)/f_{1,X}(x)$. Define $V_1=v(X_1,Y_1)$, $V_2=v(X_2,Y_2)$, and $V_2'=v(X_2',Y_2')$. Let $\hat v:\mathcal X\times\mathcal Y\mapsto \mathbb R$ be an arbitrary non-random function and define $\hat V_1$, $\hat V_2$, $\hat V_2'$ similarly as $V_1,V_2,V_2'$ using $\hat v$. We have
		$$
		\mathbb E g(X_1)\indc(\hat V_1< \hat V_2) = \mathbb E V_2'\indc(\hat V_2'<\hat V_2)
		$$
		and
		$$
		\mathbb E g(X_1)\indc(\hat V_1\le \hat V_2) = \mathbb E V_2'\indc(\hat V_2'\le \hat V_2)\,.
		$$
	\end{lemma}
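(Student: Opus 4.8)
The plan is to condition on $(X_2,Y_2)$ and reduce both identities to a single pointwise change-of-measure from $P_1$ to $P_2$. Since $\hat V$ is a fixed (non-random) function, once I condition on $(X_2,Y_2)$ the quantity $\hat V_2=\hat V(X_2,Y_2)$ becomes a deterministic threshold, and the only remaining randomness in $g(X_1)\indc(\hat V_1<\hat V_2)$ comes from the independent draw $(X_1,Y_1)\sim P_1$. By the tower property it will therefore suffice to establish, for an arbitrary fixed threshold value $t$,
\[
\mathbb E_{(X_1,Y_1)\sim P_1}\big[g(X_1)\indc(\hat V(X_1,Y_1)<t)\big]=\mathbb E_{(X_2',Y_2')\sim P_2}\big[V(X_2',Y_2')\indc(\hat V(X_2',Y_2')<t)\big],
\]
together with the analogous statement for $\le$; substituting $t=\hat V_2$ and averaging over $(X_2,Y_2)$ then delivers the lemma.

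The heart of the argument will be an identity between the joint densities. Writing $f_j(x,y)=f_{j,X}(x)f_j(y|x)$ for the joint density under $P_j$ relative to the common base measure, the definitions $g(x)=f_{2,X}(x)/f_{1,X}(x)$ and $V(x,y)=f_1(y|x)/f_2(y|x)$ yield, pointwise,
\[
g(x)\,f_1(x,y)=\frac{f_{2,X}(x)}{f_{1,X}(x)}\,f_{1,X}(x)f_1(y|x)=f_{2,X}(x)f_1(y|x)=V(x,y)\,f_{2,X}(x)f_2(y|x)=V(x,y)\,f_2(x,y).
\]
Multiplying through by $\indc(\hat V(x,y)<t)$ and integrating against the base measure turns the left-hand expectation over $P_1$ into precisely the right-hand expectation over $P_2$, which is exactly the displayed identity I need. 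The same computation with $\indc(\hat V(x,y)\le t)$ gives the non-strict version, since the indicator enters only as a fixed measurable cutoff and plays no role in the change of measure.

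I do not expect a genuine obstacle here: essentially all of the content lives in the one-line density identity $g\,f_1=V\,f_2$, after which the result follows from a routine conditioning step and an interchange of integration justified by nonnegativity of the integrands (Tonelli's theorem applied to the two independent draws). The only points requiring minor care are the measure-theoretic bookkeeping of the common base measure and confirming that $g$ and $V$ are well defined $P_1$- and $P_2$-almost everywhere; both are guaranteed by the standing assumption that $P_{1,X}$ and $P_{2,X}$, as well as $P_1(\cdot|x)$ and $P_2(\cdot|x)$, are mutually equivalent, so the relevant density ratios are finite and positive on the supports that matter.
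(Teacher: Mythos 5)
Your proposal is correct and follows essentially the same route as the paper: the paper's proof likewise rewrites $g(X_1)$ as $\frac{f_2(X_1,Y_1)}{f_1(X_1,Y_1)}V(X_1,Y_1)$ so that the $P_1$-expectation becomes $\int f_2(x_1,y_1)V(x_1,y_1)\indc(\hat V(x_1,y_1)<\hat V(x_2,y_2))\,dx_1dy_1$ integrated against $f_2(x_2,y_2)$, which is exactly your pointwise identity $g\,f_1=V\,f_2$ combined with conditioning on $(X_2,Y_2)$. Your write-up just makes the Tonelli/conditioning step more explicit than the paper's one-line double integral.
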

	\begin{proof}
		By definition,
		\begin{eqnarray*}
			&  & \mathbb E g(X_1)\mathbbm{1}(\hat V_1<\hat V_2)\nonumber\\
			&=& \mean \left\{ \frac{f_2(X_{1},Y_{1})}{f_1(X_{1},Y_{1})}\frac{f_1(Y_{1}|X_{1})}{f_2(Y_{1}|X_{1})}\indc(\hat V_1 < \hat V_2) \right\} \nonumber\\
			& =& \int f_2(x_2,y_2) \int_{\hat v(x_1, y_1) < \hat v(x_2, y_2)} f_2(x_1,y_1)v(x_1,y_1) dx_1dy_1dx_2dy_2  \nonumber\\
			& =& \mean \{V_2^{\prime}\indc(\hat V_2^{\prime} < \hat V_2)\}\,.
		\end{eqnarray*}
		The proof of the second equation is identical.
	\end{proof}

	\begin{lemma}\label{lem:tilde_sigma/hat_sigma}
		Let $\hat \sigma^2 = \hat \sigma_1^2 + n_{11}/(12n_{21})+\hat \sigma_2^2/4 - \hat \rho_{12} $ be defined as in \eqref{eq:hat_sigma}, and $\tilde \sigma^2 = \tilde \sigma_1^2 + n_{11}/(12n_{21})+\sigma_2^2/4 - \tilde \rho_{12}$ where $\tilde{\sigma}_1^2 = \mathrm{Var}_*[G_{11}\{1-\hat F_{1/2}(\hat V_{11})\}]$, $\sigma_2^2= \mathrm{Var}(G_{11})$ and $\tilde{\rho}_{12} =\mathrm{Cov}_*(G_{11}\{1-\hat F_{1/2}(\hat V_{11})\}, G_{11} )$.
		Then we have $\tilde \sigma/ \hat \sigma = 1+o_p(1)$ under Assumptions \ref{ass:moment}, \ref{ass:accuracy_h_g}.
	\end{lemma}
	
	\begin{proof}
		Recall that $\hat F$ is the CDF of $\hat v(X_2, Y_2)$ by treating $\hat v$ as fixed. Let $\hat F_n$ be the empirical CDF of $\hat v(X_2, Y_2)$ from the ranking sample. Because the ranking sample and $\hat{v}$ are independent, we have
		$$
		\|\hat F_n-\hat F\|_\infty=O_P(n_{21}^{-1/2})\,.
		$$
		
		Consider the estimate 
		$$
		\hat \sigma_1^2 = \frac{1}{n_{11}}\sum_{i}\hat G_{1i}^2 \left\{1-\hat{F}_{n,1/2}(\hat V_{1i})\right\}^2 - \left[\frac{1}{n_{11}} \sum_i \hat G_{1i}\{1-\hat F_{n,1/2}(\hat V_{1i})\} \right]^2.
		$$
		
		Then we have
		\begin{eqnarray*}
			& & \hat \sigma_1^2 - \tilde \sigma_1^2 \\
			& = & \frac{1}{n_{11}}\sum_{i}\hat G_{1i}^2 \left\{1-\hat{F}_{n,1/2}(\hat V_{1i})\right\}^2 - \mathbb{E}_*\left[G_{11}^2\left\{1-\hat F_{1/2}(\hat V_{11})\right\}^2\right] + \\
			& & \mathbb E_*^2 [G_{11}\{ 1-\hat F_{1/2}(\hat V_{11}) \}] - \left[\frac{1}{n_{11}} \sum_i \hat G_{1i}\{1-\hat F_{n,1/2}(\hat V_{1i})\} \right]^2 \\
			& = & I + II.
		\end{eqnarray*}
		To control the term I, we have
		\begin{eqnarray*}
			I & = & \frac{1}{n_{11}}\sum_i (\hat G_{1i}^2 - G_{1i}^2)\{ 1-\hat{F}_{n,1/2}(\hat V_{1i}) \}^2\\
			& & + \frac{1}{n_{11}} \sum_i G_{1i}^2 \left[ \{1-\hat F_{n,1/2}(\hat V_{1i}) \}^2 - \{ 1-\hat F_{1/2}(\hat V_{1i}) \}^2  \right] \\
			& &  +\frac{1}{n_{11}} \sum_i G_{1i}^2\{1-\hat F_{1/2}(\hat V_{1i})\}^2 - 
			\mathbb{E}_*\left[G_{11}^2\left\{1-\hat F_{1/2}(\hat V_{11})\right\}^2\right] \\
			& = & I_1 + I_2 + I_3.
		\end{eqnarray*}
		Due to the fact that $\{ 1-\hat{F}_{n,1/2}(\hat V_{1i}) \}^2 \le 1$ and by \Cref{ass:accuracy_h_g}(a), 
		$$ |I_1|  \le \frac{1}{n_{11}} \sum_i |G_{1i}^2 - \hat G_{1i}^2| = o_P(1) \,. $$
		To control the term $I_2$, we have
		$$ |I_2| \le \frac{2\|\hat{F}_n - \hat F\|_{\infty}}{n_{11}}\sum_i G_{1i}^2 = O_P(n_{21}^{-1/2}). $$
		Controlling the term $I_3$ follows from the weak law of large numbers (WLLN), which gives $ |I_3| = o_P(1). $
		Putting these pieces together, we obtain $I=o_P(1)$. In a similar way, we conclude that $II=o_P(1)$. 
		
		Analogously, we could establish that $\sigma_2^2 - \hat \sigma_2^2 = o_P(1)$ and $\tilde{\rho}_{12} - \hat \rho_{12}=o_P(1)$. We complete the proof by using the continuous mapping theorem.	
	\end{proof}
	
	\begin{lemma}\label{lem:sigma/hat_sigma}
		Let $\hat\sigma^2$ and $\tilde\sigma^2$ be defined as in \eqref{eq:hat_sigma} and \Cref{lem:tilde_sigma/hat_sigma}.
		Let $\sigma^2 = \sigma_1^2 + n_{11}/(12n_{21})+\sigma_2^2/4 - \rho_{12}$ be the ideal version of $\tilde\sigma^2$ using the true conditional density ratio $v$. If $|\mathbb E_*G_{11}(\hat D_{11}-D_{11})|=o_P(n_{11}^{-1/2})$, then we have $\sigma/ \hat \sigma = 1+o_p(1)$.
	\end{lemma}
	
	\begin{proof}
		According to \Cref{lem:tilde_sigma/hat_sigma} and Slutsky's theorem, it suffices to prove that $\sigma / \tilde \sigma=1+o_P(1)$.
		
		Note that 
		\begin{align*}
		\sigma_1^2 - \tilde \sigma_1^2 = & \mathbb E_*G_{11}^2\left[ \{1-F_{1/2}(V_{11})\}^2 - \{1-\hat F_{1/2}(\hat V_{11}) \}^2 \right] \\
		&+ \mathbb E_*^2 G_{11}\{1-\hat F_{1/2}(\hat V_{11}) \} - \mathbb E^2 G_{11} \{1-F_{1/2}(V_{11})\} \\
		= & I+II.
		\end{align*}
		
		To bound the terms $I$ and $II$, we have
		$$ |I| \le 2 \mathbb E_* G_{11}^2|F_{1/2}(V_{11})- \hat F_{1/2}(\hat V_{11})| \,, $$
		$$ |II| \le 2\mathbb E G_{11} \mathbb E_* G_{11}|F_{1/2}(V_{11})- \hat F_{1/2}(\hat V_{11})| \,. $$
		An application of H\"{o}lder's inequality implies that $|I|=o_P(1)$ and $II=o_P(1)$. Analogously, we obtain $\tilde \rho_{12} - \rho_{12}=o_P(1)$, which completes the proof.
	\end{proof}
	
	\begin{lemma}\label{lem:hat_U_U_prime}
		Let $U_j'=\left(n_{11}^{-1}\sum_{i=1}^{n_{11}}G_{1i}\hat D_{ij}\right)/\left(n_{11}^{-1}\sum_{i=1}^{n_{11}}G_{1i}\right)$, and 
		$$T'=\frac{\frac{1}{2}-\frac{1}{n_{21}}\sum_{j=1}^{n_{21}}U_j'}{\hat\sigma/\sqrt{n_{11}}}\,.$$
		Then,  under Assumptions \ref{ass:moment}, \ref{ass:accuracy_h_g}, 
		$$
		\hat T - T'=  \frac{\mathbb E_* (\hat G_{11}-G_{11})\hat D_{11}-\mathbb E_*(\hat G_{11}-G_{11})\mathbb E_* (G_{11} \hat D_{11})}{\hat\sigma/\sqrt{n_{11}}} + o_P(1)\,,
		$$
		where the expectation is taken over the ranking subsample while $\hat g$ and $\hat v$ are treated as fixed.
		Moreover, we have $T' \rightsquigarrow N(0,1)$ as $n_{11}\to \infty$ under $H_0$.
	\end{lemma}
	
	\begin{proof}[Proof of \Cref{lem:hat_U_U_prime}]
		
		Note that
		\begin{align*}
		\hat U_{j}-U_j' = \frac{\frac{1}{n_{11}}\sum_i (\hat G_{1i}-G_{1i})\hat D_{ij}}{\frac{1}{n_{11}}\sum_i \hat G_{1i}}+\frac{\frac{1}{n_{11}}\sum_{i} G_{1i}\hat D_{ij}}{\frac{1}{n_{11}}\sum_i G_{1i}}\left(\frac{\frac{1}{n_{11}}\sum_i G_{1i}}{\frac{1}{n_{11}}\sum_i \hat G_{1i}}-1\right).
		\end{align*}
		
		By law of large numbers, we have $n_{11}^{-1}\sum_i G_{1i} = 1 + o_P(1)$. Under Assumption \ref{ass:accuracy_h_g}(a), we obtain $n_{11}^{-1} \sum_i \hat G_{1i} = 1 + o_P(1)$ and $n_{11}^{-1} \sum_i (G_{1i}-\hat G_{1i}) = \mathbb{E}_*(G_{11}-\hat G_{11}) + o_P(n_{11}^{-1/2})$. Since $|\hat D_{ij}| \le 1$ and $|n_{21}^{-1}\sum_j \hat D_{ij}| \le 1$, we have $(n_{11}n_{21})^{-1}\sum_{i,j}(\hat G_{1i}-G_{1i})\hat D_{ij} = \mathbb E_*(\hat G_{11}-G_{11})\hat D_{11}+o_P(n_{11}^{-1/2})$ and $n_{11}^{-1} \sum_{i} G_{1i}\hat D_{ij} = \mathbb E_* G_{11} \hat D_{11} + O_P(n_{11}^{-1/2})$.

		Thus, by continuous mapping theorem,
		\begin{align*}
		\frac{1}{n_{21}}\sum_j(\hat U_{j}-U_j')=&\frac{\frac{1}{n_{11}n_{21}}\sum_{i,j} (\hat G_{1i}-G_{1i})\hat D_{ij}}{\frac{1}{n_{11}}\sum_i \hat G_{1i}}+\frac{\frac{1}{n_{11}n_{21}}\sum_{i,j} G_{1i}\hat D_{ij}}{\frac{1}{n_{11}}\sum_i G_{1i}}\left(\frac{\frac{1}{n_{11}}\sum_i G_{1i}}{\frac{1}{n_{11}}\sum_i \hat G_{1i}}-1\right)\\
		=&\left[\mathbb E_*(\hat G_{11}-G_{11})\hat D_{11}+o_P(n_{11}^{-1/2})\right](1+o_P(1))\\
		&+\left[\mathbb E_* G_{11}\hat D_{11}+O_P(n_{11}^{-1/2})\right]\left[\mathbb E_* (G_{11}-\hat G_{11})+o_P(n_{11}^{-1/2})\right](1+o_P(1)).
		\end{align*}
		
		Then we have
		\begin{align*}
		\hat T - T'= & \frac{\mathbb E_* (\hat G_{11}-G_{11})\hat D_{11}-\mathbb E_*(\hat G_{11}-G_{11})\mathbb E_* (G_{11} \hat D_{11})}{\hat\sigma/\sqrt{n_{11}}} + o_P(1)\,.
		\end{align*}
		
		Next, we prove $T' \rightsquigarrow N(0,1)$ as $n_{11}\to \infty$ under $H_0$. Define
		$$ T'' = \frac{\frac{1}{2}-\frac{1}{n_{21}}\sum_{j=1}^{n_{21}}U_j'}{\tilde\sigma/\sqrt{n_{11}}}\,,$$
		where $\tilde{\sigma}^2$ is defined as in \Cref{lem:tilde_sigma/hat_sigma}.
		
		Recall that $\hat{F}$ is the conditional CDF of $\hat V_{21}$ given $\hat{v}$, $\hat{F}_{-}$ its left limit, and $\hat{F}_\zeta = (1-\zeta)\hat F_-+\zeta \hat F$ for $\zeta\in[0,1]$.
		
		Using the marginal projection of a two sample U-statistic, we have
		\begin{equation}\label{eq:marginal_projection}
		\frac{1}{n_{11}n_{21}}\sum_{i,j} G_{1i}\hat D_{ij} =	\frac{1}{n_{11}}\sum_{i=1}^{n_{11}}\hat H_{1i}+\frac{1}{n_{21}}\sum_{j=1}^{n_{21}}\hat H_{2j}+\frac{1}{2}+R
		\end{equation}
		where
		\begin{align*}
		\hat H_{1i} = & \mathbb{E}_*\left[G_{1i}\hat D_{ij}|(X_{1i},Y_{1i})\right] -1/2 =  g(X_{1i})\{1-\hat F_{1/2}(\hat V_{1i})\} - 1/2\,,\\
		\hat H_{2j} = & \mathbb{E}_*\left[G_{1i}\hat D_{ij}|(X_{2j},Y_{2j}),\zeta_j\right] -1/2 =  \hat F_{\zeta_j}(\hat V_{2j}) - 1/2\,,
		\end{align*}
		and $R$ is a remainder term to make \eqref{eq:marginal_projection} hold, which satisfies
		\begin{align*}
		\mathbb E_* R^2 = \frac{1}{n_{11}^2 n_{21}^2} \sum_{i,j,i',j'}\mathbb E_* \tilde H_{ij}\tilde H_{i'j'}\,,
		\end{align*}
		where $\tilde H_{ij}=G_{1i}\hat D_{ij}-\hat H_{1i}-\hat H_{2j}-1/2$.
		By construction,  $\mathbb E_* \tilde H_{ij}\tilde H_{i'j'}$ is nonzero only if $(i,j)=(i',j')$, and $\mathbb E_* R^2$ reduces to
		$$
		\mathbb E_* R^2=\frac{1}{n_{11}^2n_{21}^2}\sum_{i,j} \mathbb E_* \tilde H_{ij}^2 = O((n_{11}n_{21})^{-1})\,.
		$$
		
		Now we can write the main part of the numerator of  $T''$ as
		$$
		\frac{1}{n_{21}}\sum_{j=1}^{n_{21}} U_j' = \frac{\frac{1}{n_{11}}\sum_{i=1}^{n_{11}}\hat H_{1i}+\frac{1}{n_{21}}\sum_{j=1}^{n_{21}}\hat H_{2j}+\frac{1}{2}+R}{\frac{1}{n_{11}}\sum_{i=1}^{n_{11}}G_{1i}} \,.
		$$
		Now apply the Lindeberg-Feller CLT to the triangular array $\{(\hat H_{1i},G_{1i}),1\le i\le n_{11}, \hat H_{2j}:1\le j\le n_{21}\}$ indexed by $n_{11}$, combining with the delta method, we have for any $\hat v$,
		$$  T'' | \hat v \rightsquigarrow N(0, 1) \,.$$
		Then, for any bounded continuous function $f$ and $Z\sim N(0,1)$, we have
		$ \mathbb E_* f(T'') \stackrel{a.s.}{\longrightarrow} \mathbb E f(Z)\,. $
		According to the bounded convergence theorem, 
		$ \mathbb E f(T'') \to \mathbb E f(Z)\,. $
		Thus, we conclude that $T'' \rightsquigarrow N(0,1)$ as $n_{11} \to \infty$.
		Combining with \Cref{lem:tilde_sigma/hat_sigma} and Slutsky's theorem yields $T' \rightsquigarrow N(0,1)$ as $n_{11} \rightarrow \infty$ under $H_0$.
	\end{proof}

	\section{Proofs of main results}\label{sec:proof_main}
	
	\begin{proof}[Proof of \Cref{lem:weighted_conf}]
		For part (a), the proof is a standard application of the Bayes theorem, and is implicitly given in \cite[equation (6) and Section 3.2]{tibshirani2019conformal}.  Here we provide the calculation for the readers' convenience.
		
		Recall that given $\tilde{\mathbf Z}$, we have $V_{n_{11}+1}=v(\tilde X_i,\tilde Y_i)$ if $\sigma(n_{11}+1)=i$. Thus
		\begin{equation}\label{eq:V_given_Z}
		(V_{n_{11}+1}|\tilde{\mathbf Z}) \sim \sum_{\sigma} P(\sigma|\tilde{\mathbf Z})\delta_{v(\tilde X_{\sigma(n_{11}+1)},\tilde Y_{\sigma(n_{11}+1)})} = \sum_{i=1}^{n_{11}+1} P\left[\sigma(n_{11}+1)=i|\tilde{\mathbf Z}\right]\delta_{v(\tilde X_{i},\tilde Y_{i})}\,.
		\end{equation}
		Using Bayes rule,
		\begin{align*}
		P\left[\sigma(n_{11}+1)=i|\tilde{\mathbf Z}\right]	=&  \frac{\sum_{\sigma(n_{11}+1)=i} \Pi_{l=1}^{n_{11}+1} f_{1,X}(\tilde X_l)f(\tilde Y_l|\tilde X_l) \frac{f_{2,X}(\tilde X_i)}{f_{1,X}(\tilde X_i)}}{\sum_{j=1}^{n_{11}+1}\sum_{\sigma(n_{11}+1)=j} \Pi_{l=1}^{n_{11}+1} f_{1,X}(\tilde X_l)f(\tilde Y_l|\tilde X_l) \frac{f_{2,X}(\tilde X_j)}{f_{1,X}(\tilde X_j)}}\\
		=& \frac{\frac{f_{2,X}(\tilde X_i)}{f_{1,X}(\tilde X_i)}}{\sum_{j=1}^{n_{11}+1} \frac{f_{2,X}(\tilde X_j)}{f_{1,X}(\tilde X_j)}}=p_i(\tilde{\mathbf Z})\,.
		\end{align*}
		As a result, \eqref{eq:V_given_Z} becomes
		\begin{align*}
		(V_{n_{11}+1}|\tilde{\mathbf Z}) \sim \sum_{i=1}^{n_{11}+1} p_i(\tilde{\mathbf Z})\delta_{v(\tilde X_{i},\tilde Y_{i})}=\sum_{i=1}^{n_{11}+1}p_i(\mathbf Z)\delta_{v(X_{i},Y_{i})}\,.
		\end{align*}
		
		Part (b) follows directly by combining part (a) with \Cref{lem:disc_unif}. 
		
		For part (c),
		we use notation $V_{1l} = v(X_{l}, Y_{l}) = f_1(Y_{l}|X_{l})/f_2(Y_{l}|X_{l})$, $l=1,\dots,n_{11}$, and $V_2 = v(X_{n_{11}+1}, Y_{n_{11}+1})=f_1(Y_{n_{11}+1}|X_{n_{11}+1})/f_2(Y_{n_{11}+1}|X_{n_{11}+1})$.  Recall that $g(x)=f_{2,X}(x)/f_{1,X}(x)$.
		
		For a given $X_2$, we consider 
		$$
		2\mathbb E_\zeta U = \frac{\sum_{l=1}^{n_{11}} g(X_{l})[\indc(V_{1l}<V_2)+\indc(V_{1l}\le V_2)]+g(X_{n_{11}+1})}{\sum_{l=1}^{n_{11}} g(X_{l})+g(X_{n_{11}+1})}\,,
		$$
		where $\mathbb E_\zeta$ denotes expectation taken only over $\zeta$, keeping everything else as given.
		Because $\mathbb E g(X_{l})=1$ for $1\le l\le n_{11}$, it can be directly verified from the strong law of large numbers on the iid random variables $X_{1},...,X_{n_{11}}$ that  
		$$2\mathbb E_{\zeta}U\rightarrow \mathbb E \left[g(X_{1})\mathbbm{1}(V_{11}<V_2)\bigg | X_{n_{11}+1},Y_{n_{11}+1}\right]+\mathbb E \left[g(X_{1})\mathbbm{1}(V_{11}\le V_2)\bigg | X_{n_{11}+1},Y_{n_{11}+1}\right]$$ a.e. over $(X_{1},...,X_{n_{11}},X_{n_{11}+1})$ as $n_{11}\rightarrow\infty$. 
		By construction we have $U\in [0,1]$. By the dominated convergence theorem, we have
		\begin{equation}\label{eq:limit_EU}
		2\mathbb E (U) \rightarrow \mathbb E g(X_{11})\mathbbm{1}(V_{11}<V_2)+\mathbb E g(X_{11})\mathbbm{1}(V_{11}\le V_2)\,.
		\end{equation}
		
		\Cref{lem:cov} implies that the right hand side of \eqref{eq:limit_EU} is (letting $V_2'$ be an iid copy of $V_2$)
		\begin{align*}
		&\mathbb E V_2' \indc(V_2'< V_2)+\mathbb E V_2'\indc(V_2'\le V_2)\\
		=&\mathbb E V_2 \indc(V_2< V_2') + 1-\mathbb E V_2'\indc(V_2'> V_2)\\
		=&1-\left[\mathbb E V_2'\indc(V_2'> V_2)-\mathbb E V_2\indc(V_2'>V_2)\right]\\
		=&1-\mathbb E (V_2'-V_2)\indc(V_2'>V_2)\\
		=&1-\frac{1}{2}\mathbb E|V_2'-V_2|\,,
		\end{align*} 
		which completes the proof.
	\end{proof}
	
	\begin{proof}[Proof of Theorem \ref{thm:null-shift}]
		Define 	$$T'=\frac{\frac{1}{2}-\frac{1}{n_{21}}\sum_{j=1}^{n_{21}}U_j'}{\hat\sigma/\sqrt{n_{11}}}\,,$$
		where $U'_j = \left(n_{11}^{-1}\sum_{i=1}^{n_{11}}G_{1i}\hat D_{ij}\right)/\left(n_{11}^{-1}\sum_{i=1}^{n_{11}}G_{1i}\right)$ and $\hat{D}_{ij} = \mathbbm{1}(\hat V_{1i} < \hat V_{2j}) + \zeta_j \mathbbm{1}(\hat V_{1i} = \hat V_{2j})$.
		Then \Cref{ass:accuracy_h_g}(b) and \Cref{lem:hat_U_U_prime} imply that $\hat T = T' + \hat T - T' \rightsquigarrow N(0,1)$ under $H_0$.
		
		Next we prove the asymptotic power under the alternative. Note that, letting $Z_{k1}=(X_{k1},Y_{k1})$ for $k=1,2$,
		\begin{eqnarray}\label{eq:var_U_U'}
		& & \mathrm{Var}_*\left\{\frac{1}{n_{11}n_{21}}\sum_{i,j} G_{1i}(D_{ij}-\hat D_{ij})\right\} \nonumber \\
		& = & \left(\frac{1}{n_{11}} \mathrm{Var_*}[\mathbb E_*\{G_{11}(D_{11}-\hat D_{11})| Z_{11} \}]  +  \frac{1}{n_{21}} \mathrm{Var}_*[\mathbb E_*\{G_{11}(D_{11}-\hat D_{11})| Z_{21}, \zeta_1 \}] \right) (1+o_P(1)) \nonumber \\
		& \le & \left[ \frac{C}{n_{11}} \mathrm{Var}_*\{G_{11}(D_{11} - \hat D_{11})\} \right](1+o_P(1)),
		\end{eqnarray}
		due to \Cref{ass:moment} and the boundedness of $\hat{F}_{\zeta}$ and $F_{\zeta}$ for any $\zeta \in [0,1]$.
		Thus, we have
		$$ \frac{1}{n_{21}}\sum_j (U_j-U_j') = \frac{\frac{1}{n_{11}n_{21}}\sum_{i,j} G_{1i}(D_{ij}-\hat D_{ij})}{\frac{1}{n_{11}}\sum_i G_{1i}} = \{\mathbb E_*G_{11}( D_{11}-\hat D_{11}) + O_P(1/\sqrt{n_{11}})\}(1+o_P(1)). $$
		Then,
		\begin{align}\label{eq:T'}
		T' = & \frac{\frac{1}{2}-\frac{1}{n_{21}}\sum_j U_j'}{\hat\sigma/\sqrt{n_{11}}} \nonumber\\
		=&\frac{\frac{1}{2}-\frac{1}{n_{21}}\sum_j U_j}{\hat\sigma/\sqrt{n_{11}}}+\frac{\frac{1}{n_{21}}\sum_j (U_j-U_j')}{\hat\sigma/\sqrt{n_{11}}} \nonumber\\
		=&\frac{\frac{1}{2}-\frac{1}{n_{21}}\sum_j U_j}{\hat\sigma/\sqrt{n_{11}}}-\frac{\mathbb E_* G_{11}(\hat D_{11}-D_{11}) }{\hat\sigma/\sqrt{n_{11}}}(1+o_P(1))+O_P(1) \nonumber\\
		= & \frac{\frac{1}{2} -\mathbb E G_{11}D_{11} }{\hat \sigma/\sqrt{n_{11}} } + \frac{\mathbb E G_{11}D_{11} - \frac{1}{n_{21}}\sum_j U_j }{\hat \sigma / \sqrt{n_{11}} } -\frac{\mathbb E_* G_{11}(\hat D_{11}-D_{11}) }{\hat\sigma/\sqrt{n_{11}}}(1+o_P(1))+O_P(1) \nonumber\\
		=&\frac{\frac{1}{4}\delta}{\hat\sigma/\sqrt{n_{11}}}+\frac{\mathbb E G_{11}D_{11} -\frac{1}{n_{21}}\sum_j U_j}{\sigma/\sqrt{n_{11}}}\frac{\sigma}{\hat\sigma}-\frac{\mathbb E_* G_{11}( \hat D_{11}-D_{11}) }{\hat\sigma/\sqrt{n_{11}}}(1+o_P(1))+O_P(1),
		\end{align}
		where $\delta = \mathbb E|v(X_2, Y_2)- v(X'_2, Y'_2)|$, and the last equality holds due to the fact that $\delta/4 = 1/2 - \mathbb EG_{11}D_{11}$ as explained in the proof of \Cref{lem:weighted_conf}(c).
		Note that 
		$$ \frac{\mathbb E G_{11}D_{11} -\frac{1}{n_{21}}\sum_j U_j}{\sigma/\sqrt{n_{11}}} \rightsquigarrow N(0,1) $$
		and $\sigma/ \hat \sigma = O_P(1)$.
		If there exists a constant $c>0$ such that 
		$$\mathbb P\left[\mathbb E_* G_{11}(\hat D_{11}-D_{11})<(1/4)\mathbb E|v(X_2,Y_2)-v(X_2',Y_2')|-c\right]\rightarrow 1\,,$$ where $(X_2,Y_2)$ and $(X_2',Y_2')$ are iid realizations from $P_2$,
		then we have $T' \to \infty$ in probability. Together with $\hat{T}-T' = o_P(1)$, we establish the desired result.
	\end{proof}
	
	\begin{proof}[Proof of Equation \eqref{eq:TV_link}]
		First realize that
		\begin{align*}
		&\mathbb E_{P_2} D_{\rm tv}(f_1(\cdot|X),f_2(\cdot|X))\\ = & \frac{1}{2}\int\int \left|f_1(y|x)-f_2(y|x)\right|dy f_2(x)dx\\ = & \frac{1}{2}\int\int \left|\frac{f_1(y|x)}{f_2(y|x)}-1\right|f_2(y|x)f_2(x)dy dx=\frac{1}{2}\mathbb E_{P_2}|v(X,Y)-1|\,.
		\end{align*}
		The lower bound follows from a conditional Jensen's inequality:
		$$
		\mathbb E|v(X,Y)-1|=\mathbb E\left|v(X,Y)-\mathbb E v(X',Y')\right|\le \mathbb E |v(X,Y)-v(X',Y')|\,,
		$$
		where $(X,Y)$, $(X',Y')$ are iid copies from $P_2$.
		
		The upper bound follows from triangle inequality:
		\begin{align*}
		&\mathbb E|v(X,Y)-v(X',Y')|=\mathbb E\left|v(X,Y)-1-[v(X',Y'-1)]\right|\\ 
		\le & \mathbb E|v(X,Y)-1|+\mathbb E|v(X',Y')-1|=2\mathbb E |v(X,Y)-1|\,.\qedhere
		\end{align*}	
	\end{proof}
	
	\begin{proof}[Proof of \Cref{pro:local_alt}]
		From \eqref{eq:var_U_U'}, we have
		$$ \mathrm{Var}_*\left\{\frac{1}{n_{11}n_{21}}\sum_{i,j} G_{1i}(D_{ij}-\hat D_{ij})\right\} \le \left[ \frac{C}{n_{11}} \mathrm{Var}_*\{G_{11}(D_{11} - \hat D_{11})\} \right](1+o_P(1)) \,. $$
		Since $|D_{11} - \hat D_{11}|\le 1$,
		$$ \mathrm{Var}_*\{G_{11}(D_{11} - \hat D_{11})\} \le \mathbb E_* G_{11}^2(D_{11}-\hat D_{11})^2 \le \mathbb E_* G_{11}^2|D_{11}-\hat D_{11}|. $$
		
		Now we study $\hat D_{11}-D_{11}$.
		Let $\xi = V_{21}-\hat V_{21} - (V_{11}-\hat V_{11})$. Then, for any $\epsilon>0$
		\begin{align*}
		&\left|\mathds{1}(\hat V_{11} < \hat V_{21}) - \mathds{1}(V_{11} < V_{21})\right|\\
		=&\mathds{1}(V_{11}+\xi<V_{21}\le V_{11},~\xi<0)+\mathds{1}(V_{11}< V_{21}\le V_{11}+\xi,~\xi>0)\\
		\le & \mathds{1}(V_{11}-|\xi|\le V_{21}\le V_{11}+|\xi|) \\
		\le & \mathds{1}(V_{11}-\epsilon\le V_{21}\le V_{11}+\epsilon) + \mathds{1}(|\xi|\ge \epsilon)\,.
		\end{align*}
		The same upper bound holds for $\mathds{1}(\hat V_{11} \le \hat V_{21}) - \mathds{1}(V_{11} \le V_{21})$ using the same argument, and we conclude that
		\begin{align*}
		|\hat D_{11}-D_{11}|= &\left| (1-\zeta_1)\left[\mathds{1}(\hat V_{11} < \hat V_{21}) - \mathds{1}(V_{11} < V_{21})\right] + \zeta_1\left[\mathds{1}(\hat V_{11} \le \hat V_{21}) - \mathds{1}(V_{11} \le V_{21})\right]\right| \\
		\le & (1-\zeta_1)\left|\mathds{1}(\hat V_{11} < \hat V_{21}) - \mathds{1}(V_{11} < V_{21})\right| + \zeta_1\left|\mathds{1}(\hat V_{11} \le \hat V_{21}) - \mathds{1}(V_{11} \le V_{21})\right| \\
		\le & \mathds{1}(V_{11}-\epsilon\le V_{21}\le V_{11}+\epsilon) + \mathds{1}(|\xi|\ge \epsilon)\,.	
		\end{align*}
		Therefore,
		\begin{align*}
		\mathbb E_* G_{11}^2|\hat D_{11}-D_{11}|\le & \mathbb E_* G_{11}^2\mathds{1}(V_{11}-\epsilon\le V_2\le V_{11}+\epsilon)
		+\mathbb E_* G_{11}^2 \mathds{1}(|\xi|\ge \epsilon)\\
		\le &  \mathbb E_* \left\{G_{11}^2\mathbb E_*\left[\mathds{1}(V_{11}-\epsilon\le V_2\le V_{11}+\epsilon)|X_{11},Y_{11}\right]\right\}+\mathbb E_* G_{11}^2 \mathds{1}(|\xi|\ge \epsilon)\\
		\le & 2C\epsilon \mathbb E G_{11}^2+\mathbb E_* G_{11}^2 \mathds{1}(|\xi|\ge \epsilon)\,,
		\end{align*}
		where the first term in the last inequality follows from the assumption of bounded density of $V_{21}$ with the constant $C$ being a finite upper bound of the density. Because $\xi=o_P(1)$ the support of $\mathds{1}(|\xi|\ge \epsilon)$ has vanishing probability measure.  The integrability of $G_{11}^2$ implies it is uniformly integrable. So we have $\mathbb E_* G_{11}^2 \mathds{1}(|\xi|\ge \epsilon)=o_P(1)$ for arbitrary $\epsilon>0$.  Therefore, we conclude that $\mathbb E_* G_{11}^2|\hat D_{11}-D_{11}|=o_P(1)$.
		Combining this variance bound with the mean bound $|\mathbb E_* G_{11}(\hat D_{11}-D_{11})|=o_P(1/\sqrt{n_{11}})$ assumed in the proposition, we have
		$$ \frac{1}{n_{21}}\sum_j (U_j-U_j') = \frac{\frac{1}{n_{11}n_{21}}\sum_{i,j} G_{1i}(D_{ij}-\hat D_{ij})}{\frac{1}{n_{11}}\sum_i G_{1i}} = o_P(1/\sqrt{n_{11}}). $$
		
		As in the proof of \Cref{thm:null-shift}, \Cref{ass:accuracy_h_g}(b) and \Cref{lem:hat_U_U_prime} imply that $\hat T - T' = o_P(1)$.
		According to \eqref{eq:T'} and \Cref{lem:sigma/hat_sigma}, we have
		\begin{eqnarray*}
			\hat T & = & \hat T - T' + T' \\
			& = & \frac{\frac{1}{4}\delta}{\hat\sigma/\sqrt{n_{11}}}+\frac{\mathbb E G_{11}D_{11} -\frac{1}{n_{21}}\sum_j U_j}{\sigma/\sqrt{n_{11}}}\frac{\sigma}{\hat\sigma} + \frac{\frac{1}{n_{21}}\sum_j (U_j-U_j')}{\hat\sigma/\sqrt{n_{11}}} + o_P(1) \\
			& = & \frac{\sqrt{n_{11}}\delta}{4\sigma}(1+o_P(1))+Z+o_P(1),
		\end{eqnarray*}
		where $Z\rightsquigarrow N(0,1)$ as $n_{11}\rightarrow\infty$. 
	\end{proof}

\end{document}